\newcommand{\DoF}{\mathsf{DoF}}
\newtheorem{defi}{\textbf{Definition}}
\newtheorem{theo}{\textbf{Theorem}}
\newtheorem{lem}{\textbf{Lemma}}
\theoremstyle{definition}
\newtheorem{ex}{Example}
\newtheorem*{illust}{Illustration}
\newtheorem{remk}{Remark}
\def\BState{\State\hskip-\ALG@thistlm}
\newcommand{\subalign}[1]{%
  \vcenter{%
    \Let@ \restore@math@cr \default@tag
    \baselineskip\fontdimen10 \scriptfont\tw@
    \advance\baselineskip\fontdimen12 \scriptfont\tw@
    \lineskip\thr@@\fontdimen8 \scriptfont\thr@@
    \lineskiplimit\lineskip
    \ialign{\hfil$\m@th\scriptstyle##$&$\m@th\scriptstyle{}##$\crcr
      #1\crcr
    }%
  }
}
\newcommand\numberthis{\addtocounter{equation}{1}\tag{\theequation}}
\newcommand*{\Scale}[2][4]{\scalebox{#1}{$#2$}}
\newcommand\reallywidehat[1]{%
\savestack{\tmpbox}{\stretchto{%
  \scaleto{%
    \scalerel*[\widthof{\ensuremath{#1}}]{\kern-.6pt\bigwedge\kern-.6pt}%
    {\rule[-\textheight/2]{1ex}{\textheight}}
  }{\textheight}%
}{0.5ex}}%
\stackon[1pt]{#1}{\tmpbox}%
}
\begin{document}
\include{commands}
\title{Fundamental Limits of Cache-Aided \\ Interference Management\footnote{A short version of this paper will be presented at the IEEE International Symposium on Information Theory (ISIT), 2016. This work is the outcome of a collaboration that started while N. Naderializadeh was a research intern at Bell Labs. This work is in part supported by NSF grants CAREER 1408639, NETS-1419632, EARS-1411244, and ONR award N000141612189.}}

\author{Navid Naderializadeh$^{*}$, Mohammad Ali Maddah-Ali$^{\dagger}$, and A. Salman Avestimehr$^{*}$\\
$^{*}$Department of Electrical Engineering, University of Southern California, Los Angeles, CA, USA \\ 
$^{\dagger}$Nokia Bell Labs, Holmdel, NJ, USA\\
E-mails: \href{mailto:naderial@usc.edu}{naderial@usc.edu}, \href{mailto:mohammad.maddah-ali@nokia.com}{mohammad.maddah-ali@nokia.com}, \href{mailto:avestimehr@ee.usc.edu}{avestimehr@ee.usc.edu}
}

\maketitle
\begin{abstract}

We consider a system, comprising a library of $N$ files (e.g., movies) and a wireless network with $K_T$ transmitters, each equipped with a local cache of size of $M_T$ files, and $K_R$ receivers, each equipped with a local cache of size of $M_R$ files. Each receiver will ask for one of the $N$ files in the library, which needs to be delivered. The objective is to design the cache placement (without prior knowledge of receivers' future requests) and the communication scheme to maximize the throughput of the delivery. In this setting, we show that the sum degrees-of-freedom (sum-DoF) of $\min\left\{\frac{K_T M_T+K_R M_R}{N},K_R\right\}$ is achievable, and this is within a factor of 2 of the optimum, under one-shot linear schemes. This result shows that (i) the one-shot sum-DoF scales \emph{linearly} with the \emph{aggregate cache size in the network} (i.e., the cumulative memory available at \emph{all nodes}), (ii) the transmitters' caches and receivers' caches contribute equally in the one-shot sum-DoF, and (iii) caching can offer a throughput gain that scales linearly with the size of the network. 

To prove the result, we propose an achievable scheme that exploits the redundancy of the content at transmitters' caches to  cooperatively zero-force some outgoing interference, and  availability of the unintended content at the receivers' caches to cancel (subtract) some of the incoming interference. We develop a particular pattern for cache placement that maximizes the overall gains of cache-aided transmit and receive interference cancellations. For the converse, we present an integer optimization problem which minimizes the number of communication blocks needed to deliver any set of requested files to the receivers. We then provide a lower bound on the value of this optimization problem, hence leading to an upper bound on the linear one-shot sum-DoF of the network, which is within a factor of 2 of the achievable sum-DoF.
\end{abstract}

\section{Introduction}\label{sec:intro}


Over the last decade, video delivery has emerged as the main driving factor of the wireless traffic. In this context, there is often a large library of pre-recorded content (e.g. movies), out of which, users may request to receive a specific file. One way to reduce the burden of this traffic is to employ memories distributed across the networks and closer to the end users to prefetch some of the popular content. This can help system to deliver the content with higher throughput and less delay.
 
%


As a result, there have been significant interests in both academia and industry in
characterizing the impact of caching on the performance of communication networks (see, e.g. \cite{golrezaei12, maddahali_caching, multi_server_caching, liu15, sengupta2015cache, ji2013wireless, MaddahAli_3userIC_Caching,azari2015hypergraph,park2016joint,
afshang2015fundamentals,tao2015content,ugur2015cloud,peng2015backhaul}). In particular, in a network with only one transmitter broadcasting to several receivers, it was shown in \cite{maddahali_caching} that local delivery attains only a small fraction of the gain that caching can offer, and by designing a particular pattern in cache placement at the users and exploiting coding in delivery, a significantly larger \emph{global} throughput gain can be achieved, which is a function of the entire cache throughout the network. This also demonstrates that the gain of caching scales with the size of the network. As a follow-up, this work has been extended to the case of multiple transmitters in~\cite{multi_server_caching}, where it was shown that the gain of caching can be improved if several transmitters have access to the entire library of files. Caching at the transmitters was also considered in~\cite{liu15,sengupta2015cache} and used to induce collaboration between transmitters in the network.  It is also shown in~\cite{MaddahAli_3userIC_Caching} that caches at the transmitters can improve load balancing and increase the opportunities for interference alignment. More recently, the authors in \cite{azari2015hypergraph} evaluated the performance of cellular networks with edge caching via a hypergraph coloring problem. Furthermore, in \cite{park2016joint}, the authors studied the problem of maximizing the delivery rate of a fog radio access network for arbitrary prefetching strategies.

In this paper, we consider a general network setting with caches at both transmitters and receivers, and demonstrate how one can utilize caches at both transmitters and receivers to manage the interference and enhance the system performance in the physical layer. In particular, we consider a library of $N$ files and a wireless network with $K_T$ transmitters and $K_R$ receivers, in which each transmitter and each receiver is equipped with a cache memory of a certain size. In particular, each transmitter and each receiver can cache up to $M_T$ and $M_R$ files, respectively. The system operates in two phases. The first phase is called the prefetching phase, where each cache is populated up to its limited size from the content of the library. This phase is followed by a delivery phase, where each user reveals its request for a file in the library. The transmitters then need to deliver the requested files to the receivers. Note that in the prefetching phase, the system is still unaware of the files that the receivers will request in the delivery phase. The goal is to design the cache contents in the prefetching phase and communication scheme in the delivery phase to achieve the maximum throughput for arbitrary set of requested files. Due to their practical appeal, in this work we focus on one-shot linear delivery strategies. Interestingly, many of the previous works on caching have relied on one-shot schemes for content delivery (see, e.g. \cite{liu15,azari2015hypergraph}).

Our main result in this paper is the characterization of the one-shot linear sum degrees-of-freedom (sum-DoF) of the network, i.e., number of the receivers that can be served interference-free simultaneously, within a factor of 2 for all system parameters. In fact, we show that the one-shot linear sum-DoF of $\min\left\{\frac{K_T M_T+K_R M_R}{N},K_R\right\}$ is achievable, and this is within a factor of 2 of the optimum. This result shows that the one-shot linear sum-DoF of the network grows linearly with the aggregate cache size in the network (i.e., the cumulative memory available at all nodes). It also implies that caches at the transmitters' side are equally valuable as the caches on the receivers' side in the one-shot linear sum-DoF of the network. Our result, therefore, establishes a fundamental limit on the performance of one-shot delivery schemes for cache-aided interference management.

To achieve the aforementioned sum-DoF, we propose a particular pattern in cache placement so that each piece of each file in the library is available in the caches of $\frac{K_T M_T}{N}$ transmitters and $\frac{K_R M_R}{N}$ receivers. Once caching is done this way, we can show that for delivering any set of requested contents to the receivers, $\min\left\{\frac{K_T M_T+K_R M_R}{N},K_R\right\}$ of the receivers can be served at each time, interference-free. This gain is achieved by simultaneously exploiting the opportunity of collaborative interference cancellation (i.e. zero-forcing) at the transmitters' side and opportunity of eliminating known interference contributions at the receivers' side. The first opportunity is created by caching the pieces of each file at several transmitters. The second opportunity is available since pieces of a file requested by one user has been cached at some other receivers, and thus do not impose interference at those receivers effectively. Our proposed cache placement pattern maximizes the overall gain achieved by these opportunities for any arbitrary set of receiver requests and this gain can be achieved even with a simple one-shot linear delivery scheme.

%

Moreover, we demonstrate that our achievable sum-DoF is within a factor of 2 of the optimal sum-DoF for one-shot linear schemes. To prove the outer bound, we take a four-step approach in order to lower bound the number of communication blocks needed to deliver any set of requested files to the receivers. First, we show that the network can be converted to a virtual MISO interference channel in each block of communication. Using this conversion, we next write an integer optimization problem for the minimum number of communication blocks needed to deliver a fixed set of requests for a given caching realization. We then show how we can focus on average demands instead of the worst-case demands to derive an outer optimization problem on the number of communication blocks optimized over the caching realizations. Finally, we present a lower bound on the value of the aforementioned optimization problem, which leads to the desired upper bound on the one-shot linear sum-DoF of the network. This result illustrates that in this setting, caches at transmitters' side are equally valuable as caches at receivers' side. It also shows that caching offers a throughput gain that scales linearly with the size of the network.

The rest of the paper is organized as follows. We present the problem formulation in Section \ref{sec:model}. We state the main result in Section \ref{sec:result}. We prove the achievability of our main result in Section \ref{sec:ach} and the converse in Section \ref{sec:converse}. Finally, we conclude the paper in Section \ref{sec:conc}.



\section{Problem Formulation}\label{sec:model}
In this section, we first provide a high-level description of the problem setting and the main parameters in the system model, and then we present a detailed description of the problem formulation.



\subsection{Problem Overview}

Consider a wireless network, as illustrated in Figure \ref{fig:network_model}, with $K_T$ transmitters and $K_R$ receivers, and also a library of $N$ files, each of which contains $F$ packets, where each packet is a vector of $B$ bits. Each node in the network is equipped with a local cache memory of a certain size that can be used to cache contents arbitrarily from the library before the receivers reveal their requests and communication begins. In particular, each transmitter and each receiver is equipped with a cache of size $M_T F$ and $M_R F$ packets, respectively.

\begin{figure}[hbt]
\centering
\includegraphics[trim=0in 0in 0in 0in, clip, width=0.5\textwidth]{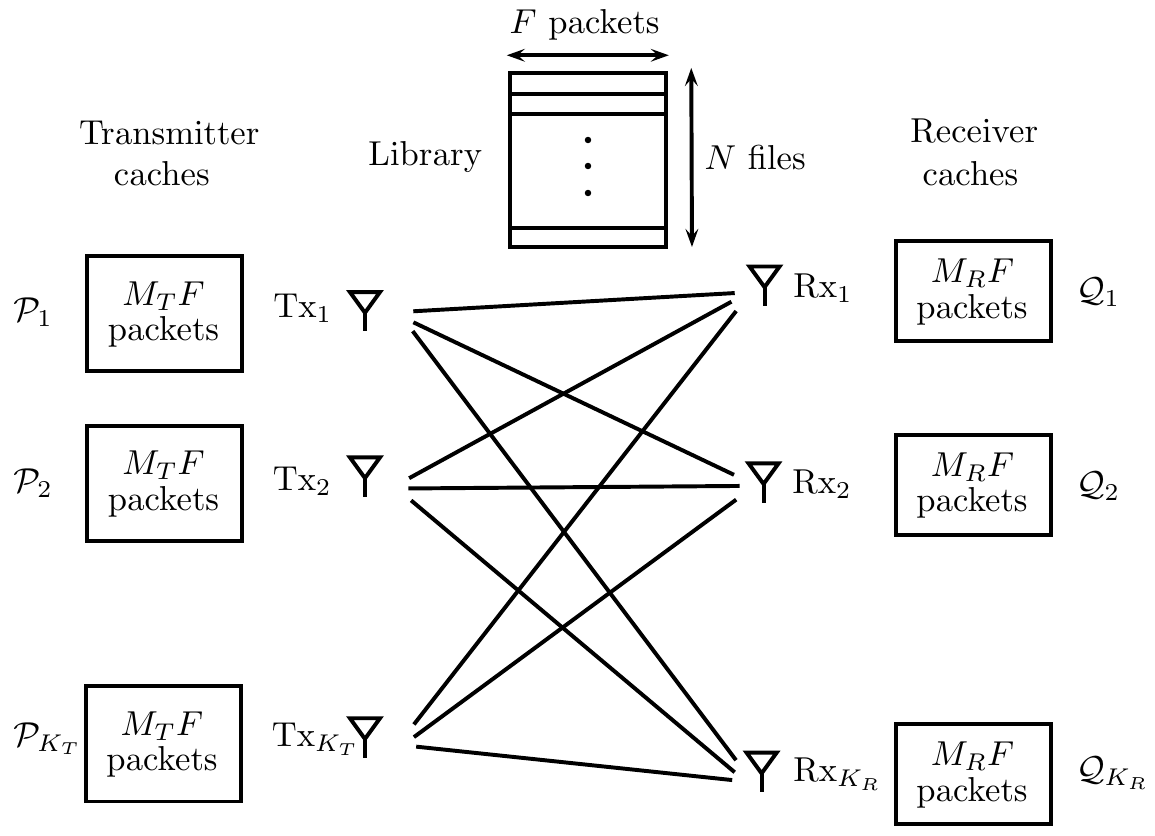}
\caption{Wireless network with $K_T$ transmitters and $K_R$ receivers, where each transmitter and each receiver caches up to $M_T F$ packets and $M_R F$ packets, respectively, from a library of $N$ files, each composed of $F$ packets.}
\label{fig:network_model}
\end{figure}

We assume that the system operates in two phases, namely the \emph{prefetching phase} and the \emph{delivery phase}. In the prefetching phase, each node can cache contents arbitrarily from the library subject to its cache size constraint. In particular, each transmitter selects up to $M_T F$ packets out of the entire library to store in its cache, and each receiver selects up to $M_R F$ packets out of the entire library to store in its cache. In the delivery phase, each receiver requests an arbitrary file from the library. Since each receiver may have cached parts of its desired file in the prefetching phase, the transmitters need to deliver the rest of the requested packets to the receivers over the wireless channel.

We assume that at each time, the transmitters employ a one-shot linear scheme, where a subset of requested packets are selected to be delivered interference-free to a corresponding subset of receivers. Each transmitter transmits a linear combination of the subset of the selected packets which it has cached in the prefetching phase. The interference is cancelled with the aid of cached contents as follows.  Since each requested packet may be cached at multiple transmitters, the transmitters can collaborate in order to zero-force the outgoing interference of that packet at some of the unintended receivers. Moreover, the receivers can also use their cached packets as side information to eliminate the remaining incoming interference from to undesired packets. 
Our objective is to design a cache placement scheme and a delivery scheme which maximize the number of packets that can be delivered at each time interference-free.


%

In this setting, we define the one-shot linear sum-degrees of freedom as the ratio of the number of delivered packets over the number of blocks needed for communicating those packets for any set of receiver demands. Finally, we define the one-shot linear sum-DoF of the network, denoted by $\DoF_{\text{L,sum}}^*(N,M_T,M_R)$, as the maximum achievable one-shot linear sum-DoF over all caching realizations.

\subsection{Detailed Problem Description}


 We consider a discrete-time additive white Gaussian noise channel, as illustrated in Figure \ref{fig:network_model}, with $K_T$ transmitters denoted by $\{\text{Tx}_i\}_{i=1}^{K_T}$ and $K_R$ receivers denoted by $\{\text{Rx}_i\}_{i=1}^{K_R}$. The communication at time $t$ over this channel is modeled by 
\begin{align}
Y_j(t)=\sum_{i=1}^{K_T} h_{ji} X_i(t)+Z_j(t),
\end{align}
where $X_i(t)\in\mathbb{C}$ denotes the signal transmitted by $\text{Tx}_i, i\in[K_T]\triangleq\{1,...,K_T\}$ and $Y_j(t)$ denotes the receive signal by $\text{Rx}_j, j\in[K_R]$. Moreover, $h_{ji}\in\mathbb{C}$ denotes the channel gain from $\text{Tx}_i$ to $\text{Rx}_j$, assumed to stay fixed over the course of communication, and $Z_j(t)$ denotes the additive white Gaussian noise at $\text{Rx}_j$ at time slot $t$, distributed as $\mathcal{CN}(0,1)$.
The transmit signal at $\text{Tx}_i, i\in[K_T]\triangleq\{1,...,K_T\}$
is subject to  the power constraint $\mathbb{E}\left[|X_i(t)|^2\right]\leq P$. 

We assume that each receiver will request an arbitrary file out of a library of $N$ files $\{W_n\}_{n=1}^{N}$, which should be delivered by the transmitters. Each file $W_n$ in the library contains $F$ packets $\{\mathbf{w}_{n,f}\}_{f=1}^{F}$, where each packet is a vector of $B$ bits; i.e., $\mathbf{w}_{n,f}\in\mathbb{F}_2^B$. Furthermore, we assume that each node in the network is equipped with a cache memory of a certain size that can be used to cache arbitrary contents from the library before the receivers reveal their requests and communication begins. In particular, each transmitter and each receiver is equipped with a cache of size $M_T F$ and $M_R F$ packets, respectively.

We assume that the network operates in two phases, namely the prefetching phase and the delivery phase, which are described in more detail as follows.





\emph{Prefetching Phase:} In this phase, each node can store an arbitrary subset of the packets from the files in the  library up to its cache size. In particular, each transmitter $\text{Tx}_i$ chooses a subset $\mathcal{P}_i$ of the $NF$ packets in the library, where $|\mathcal{P}_i|\leq M_T F$, to store in its cache. Likewise, each receiver $\text{Rx}_i$ stores a subset $\mathcal{Q}_i$ of the packets in the library, where $|\mathcal{Q}_i|\leq M_R F$. Caching is done at the level of whole packets and we do not allow breaking the packets into smaller subpackets. Also, this phase takes place unaware of the receivers' future requests.




\emph{Delivery Phase:} In this phase, each receiver $\text{Rx}_j, j\in[K_R]$, reveals its request for an arbitrary file $W_{d_j}$ from the library for some $d_j\in[N]$. We let $\mathbf{d}=[d_1 ~ ... ~ d_{K_R}]^T$ denote the demand vector. 
Depending on the demand vector $\mathbf{d}$ and the cache contents, each receiver has already cached some packets of its desired file and there is no need to deliver them.
The transmitters will be responsible for delivering the rest of the requested packets to the receivers. In order to make sure that any piece of content in the library is stored at the cache of at least one transmitter in the network, we assume that the transmitter cache size satisfies $K_T M_T\geq N$.

Each transmitter first employs a random Gaussian coding scheme $\psi: \mathbb{F}_2^B \rightarrow \mathbb{C}^{\tilde{B}}$ of rate $\log P+o(\log P)$ to encode each of its cached packets into a \emph{coded packet} composed of $\tilde{B}$ complex symbols, so that each coded packet carries one degree-of-freedom (DoF).
We denote the coded version of each packet $\mathbf{w}_{n,f}$ in the library by $\tilde{\mathbf{w}}_{n,f}\triangleq\psi(\mathbf{w}_{n,f})$. Afterwards, the communication takes place over $H$ blocks, each of length $\tilde{B}$ time slots. In each block $m\in[H]$, the goal is to deliver a subset of the requested packets, denoted by $\mathcal{D}_m$, to a subset of receivers, denoted by $\mathcal{R}_m$, such that each packet in $\mathcal{D}_m$ is intended to exactly one of the receivers in $\mathcal{R}_m$. In addition, the set of transmitted packets in all blocks and the cache contents of the receivers should satisfy
\begin{align}
 \{\mathbf{w}_{d_j,f}\}_{f=1}^{F} \subset \left(\bigcup_{m=1}^{H}\mathcal{D}_m \right) \cup \mathcal{Q}_j, \ \forall j \in [K_R],
\end{align}
which implies that for any receiver $\text{Rx}_j, j\in[K_R]$, each of its requested packets should be either transmitted in one of the blocks or already stored in its own cache.

In each block $m\in[H]$, we assume a one-shot linear scheme where each transmitter transmits an arbitrary linear combination of a subset of the coded packets in $\mathcal{D}_m$ that it has cached. Particularly, $\text{Tx}_i, i\in[K_T]$ transmits $\mathbf{x}_i[m]\in\mathbb{C}^{\tilde{B}}$,  where
\begin{align}
\label{eq:beamforming}
\mathbf{x}_i[m]=\sum_{\substack{(n,f):\\\mathbf{w}_{n,f}\in\mathcal{P}_i \cap \mathcal{D}_m}} v_{i,n,f}[m] ~ \tilde{\mathbf{w}}_{n,f},
\end{align}
and $v_{i,n,f}[m]$'s denote the complex beamforming coefficients that $\text{Tx}_i$ uses to linearly combine its coded packets in block $m$.  




On the receivers' side, the received signal of each receiver $\text{Rx}_j\in\mathcal{R}_m$ in block $m$, denoted by $\mathbf{y}_j[m]\in\mathbb{C}^{\tilde{B}}$, can be written as
\begin{align}
{\mathbf{y}_j[m]=\sum_{i=1}^{K_T} h_{ji} \mathbf{x}_i[m]+\mathbf{z}_j[m],}
\end{align}
where $\mathbf{z}_j[m]\in\mathbb{C}^{\tilde{B}}$ denotes the noise vector at $\text{Rx}_j$ in block $m$. Then, receiver $\text{Rx}_j$ will use the contents of its cache to cancel (subtract out) the interference of some of undesired packets in $\mathcal{D}_m$, if they exist in its cache.   In particular, each receiver $\text{Rx}_j \in \mathcal{R}_m $, forms a linear combination $\mathcal{L}_{j,m}$, as
\begin{align}
\label{eq:decoding1}
\mathcal{L}_{j,m}(\mathbf{y}_j[m], \tilde{\mathcal{Q}}_j  ) 
\end{align}
to recover $\tilde{\mathbf{w}}_{d_j,f} \in \mathcal{D}_m$, where $\tilde{\mathcal{Q}}_j$ denotes the set of coded packets cached at receiver $\text{Rx}_j$.

The communication in block $m\in H$ to transmit the packets in $\mathcal{D}_m$  is successful, if there exist linear combinations \eqref{eq:beamforming} at the transmitters' side and \eqref{eq:decoding1} at receivers' side, such that for all $\text{Rx}_j\in\mathcal{R}_m$, 
\begin{align}\label{eq:decoding}
\mathcal{L}_{j,m}(\mathbf{y}_j[m], \tilde{\mathcal{Q}}_j )  = \tilde{\mathbf{w}}_{d_j,f}+\mathbf{z}_j[m]. 
\end{align}
The channel created in \eqref{eq:decoding} is a point-to-point channel, whose capacity is $\log P + o(\log P)$. Hence, since each coded packet $\tilde{\mathbf{w}}_{d_j,f}$ is coded with rate $\log P + o(\log P)$, it can be decoded with vanishing error probability as $B$ increases. We assume that the communication continues for $H$ blocks until all the desired packets are successfully delivered to all receivers.

Since each packet carries one degree-of-freedom, the one-shot linear sum-degrees-of-freedom (sum-DoF) of $|\mathcal{D}_m|$ is achievable in each block $m\in [H]$. This implies that throughout the $H$ blocks of communication, the one-shot linear sum-DoF of $\frac{\left|\bigcup_{m=1}^H \mathcal{D}_m \right|}{H}$ is achievable.
Therefore, for a given caching realization, we define the one-shot linear sum-DoF to be maximum achievable one-shot linear sum-DoF for the worst case demands; i.e.,
\begin{align}\label{eq:d_sum_cache}
{\DoF_{\text{L,sum}}^{\left(\{\mathcal{P}_i\}_{i=1}^{K_T},\{\mathcal{Q}_i\}_{i=1}^{K_R}\right)}= \adjustlimits \inf_{\mathbf{d}} \sup_{H,\{\mathcal{D}_m\}_{m=1}^H} \frac{\left|\bigcup\limits_{m=1}^H \mathcal{D}_m \right|}{H}.}
\end{align}

This leads us to the definition of the one-shot linear sum-DoF of the network as follows.

\begin{defi}\label{def:sumdof}
For a network with a library $N$ files, each containing $F$ packets, and cache size of $M_T$ and $M_R$ files at each transmitter and receiver, respectively, we define the one-shot linear sum-DoF of the network as the maximum achievable one-shot linear sum-DoF over all caching realizations; i.e.,
\begin{align}
{\DoF_{\emph{L,sum}}^*(N,M_T,M_R)=} & {\sup_{\{\mathcal{P}_i\}_{i=1}^{K_T},\{\mathcal{Q}_i\}_{i=1}^{K_R}}  \DoF_{\emph{L,sum}}^{\left(\{\mathcal{P}_i\}_{i=1}^{K_T},\{\mathcal{Q}_i\}_{i=1}^{K_R}\right)}}\\
&\quad\quad~~\emph{s.t.} ~\ \ \Scale[.93]{|\mathcal{P}_i|\leq M_T F, ~\forall i\in[K_T]}\\
&\hspace{.68in} {|\mathcal{Q}_i|\leq M_R F, ~\forall i\in[K_R],}
\end{align}
where $\DoF_{\emph{L,sum}}^{\left(\{\mathcal{P}_i\}_{i=1}^{K_T},\{\mathcal{Q}_i\}_{i=1}^{K_R}\right)}$ is defined in \eqref{eq:d_sum_cache}.
\end{defi}

\section{Main Result and its Implications}\label{sec:result}

In this section, we present our main result on the one-shot linear sum-DoF of the network and its implications.

\begin{theo}\label{thm:main}
For a network with a library of $N$ files, each containing $F$ packets, and cache size of $M_T$ and $M_R$ files at each transmitter and each receiver, respectively, the one-shot linear sum-DoF of the network, as defined in Definition \ref{def:sumdof}, satisfies
\begin{align}\label{eq:ach_dof}
\min\left\{\frac{K_T M_T+K_R M_R}{N},K_R\right\} \leq \DoF_{\emph{L,sum}}^*(N,M_T,M_R)\leq \min\left\{2\frac{K_T M_T+K_R M_R}{N},K_R\right\},
\end{align}
for sufficiently large $F$.
\end{theo}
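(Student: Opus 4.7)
The proof naturally splits into an achievability part and a converse part, which I would tackle independently.

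\emph{Achievability.} Let $t_T=K_TM_T/N$ and $t_R=K_RM_R/N$; I would first treat the case in which these are integers, with the general case following by the standard memory-sharing trick of interpolating between two integer schemes. The cache placement splits each file $W_n$ into $\binom{K_T}{t_T}\binom{K_R}{t_R}$ equal-sized subfiles $W_{n,\mathcal{T},\mathcal{R}}$ labeled by pairs $(\mathcal{T},\mathcal{R})$ with $\mathcal{T}\subseteq[K_T]$, $|\mathcal{T}|=t_T$ and $\mathcal{R}\subseteq[K_R]$, $|\mathcal{R}|=t_R$, and stores $W_{n,\mathcal{T},\mathcal{R}}$ at every transmitter in $\mathcal{T}$ and every receiver in $\mathcal{R}$; a direct count confirms the per-node memory budgets $M_TF$ and $M_RF$. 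For delivery, assuming $K_R\geq t_T+t_R$ (the complementary case is handled analogously by serving all $K_R$ receivers per block), each block is indexed by a transmitter subset $\mathcal{T}$ with $|\mathcal{T}|=t_T$ and a receiver subset $\mathcal{S}$ with $|\mathcal{S}|=t_T+t_R$. For every $j\in\mathcal{S}$ I pick a subset $\mathcal{R}_j\subseteq\mathcal{S}\setminus\{j\}$ of size $t_R$ and transmit the coded subfile corresponding to $W_{d_j,\mathcal{T},\mathcal{R}_j}$ through a beamformer supported on $\mathcal{T}$; since $|\mathcal{T}|=t_T$, the beamformer has $t_T-1$ nulling degrees of freedom, which exactly suffice to zero the signal at the $t_T-1$ unintended receivers in $\mathcal{S}\setminus(\{j\}\cup\mathcal{R}_j)$, while each receiver in $\mathcal{R}_j$ cancels the residual interference using its cached copy. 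A symmetric combinatorial assignment of the $\mathcal{R}_j$'s across the family of blocks guarantees that every required subfile is served exactly once, and counting yields a sum-DoF of $t_T+t_R=(K_TM_T+K_RM_R)/N$.

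\emph{Converse.} For the upper bound I would follow the four-step plan sketched in the introduction. First, observe that each block of a one-shot linear scheme is equivalent to a virtual MISO interference channel in which each transmitted packet $p$ is carried cooperatively by the transmitters caching $p$. Hence, if block $m$ successfully delivers a set $\mathcal{D}_m$ of packets to a set $\mathcal{R}_m$ of receivers, with one intended packet per receiver, then for every intended pair $(p,r)$ one must have $\mu(p)+\rho_m(p)\geq|\mathcal{R}_m|$, where $\mu(p)$ counts the transmitters caching $p$ and $\rho_m(p)$ counts the unintended receivers in $\mathcal{R}_m$ caching $p$. This recasts the minimum number of blocks required for a given demand and caching realization as the value of an integer program. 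Second, rather than bounding the worst-case demand directly, I would average the per-realization lower bound over a carefully chosen family of demand vectors (for instance, a set of cyclic shifts of a reference vector using distinct files), which preserves the per-realization feasibility constraints but converts the per-packet inequalities into a single aggregate inequality involving only the global cache sizes $K_TM_TF$ and $K_RM_RF$. Third, a relaxation/counting argument on this aggregate optimization gives a lower bound on $H$ of the form $H\geq N\cdot(\text{total packets delivered})/[2(K_TM_T+K_RM_R)]$ that is uniform over caching realizations. Combined with the trivial cut bound $\DoF\leq K_R$, this yields $\DoF_{\text{L,sum}}^*(N,M_T,M_R)\leq\min\{2(K_TM_T+K_RM_R)/N,K_R\}$.

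\emph{Main obstacle.} The combinatorial bookkeeping on the achievability side is essentially a symmetry exercise, and the beamforming feasibility follows from generic position of the channel coefficients, so I expect the real difficulty to sit in the converse. Aggregating the per-block, per-packet constraints $\mu(p)+\rho_m(p)\geq|\mathcal{R}_m|$ into a clean global bound is subtle because one must combine them in a way that treats transmitter and receiver caches symmetrically and produces exactly the stated constant $2$; the demand-averaging step is where this symmetry is built in, and identifying the right family of demands—so that the integer relaxation is tight up to a factor of two—will be the most delicate part of the argument.
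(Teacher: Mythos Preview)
Your achievability plan is essentially the paper's: the same subfile labeling $W_{n,\mathcal{T},\mathcal{R}}$, the same placement, and the same interference-management idea (zero-force at $t_T-1$ unintended receivers, cancel at $t_R$ more via side information). One cosmetic difference: you keep a single transmitter set $\mathcal{T}$ per block, whereas the paper cycles through shifts $\mathcal{T}\oplus_{K_T}(l-1)$ so that different subfiles in the same block sit at different transmitter sets. Your variant also works (it is basically the multi-server scheme applied independently to each $\mathcal{T}$), and is arguably cleaner; the paper's cyclic shifting is not essential to the DoF count.

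On the converse, your four-step skeleton matches the paper exactly, including the per-block MISO bound $|\mathcal{D}_m|\le \mu(p)+\rho(p)$ (the paper's Lemma~\ref{lem:one_shot_ub}) and the averaging over demand permutations. The one place where your expectation diverges from what actually happens is the endgame: the factor $2$ does \emph{not} drop out of the counting/relaxation step. What the paper obtains from the averaged integer program, via a double Cauchy--Schwarz on the quantities $b_{r,r'}$, is
\[
\bar H \;\ge\; \frac{K_R N F\,(1-M_R/N)^2}{K_TM_T+K_RM_R},
\qquad\text{i.e.,}\qquad
\DoF_{\text{L,sum}}^*\;\le\;\frac{K_TM_T+K_RM_R}{N-M_R}.
\]
The constant $2$ then enters only through a separate case split: if $M_R\le N/2$ one uses $N-M_R\ge N/2$; if $M_R>N/2$ one compares the achievable lower bound (which already exceeds $K_R/2$) against the trivial cut $K_R$. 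So the ``right family of demands'' you are searching for to make the relaxation tight up to a factor of two does not exist in the paper's argument; instead, the relaxation yields the $(N-M_R)$-denominator bound and the factor of two is a post-processing artifact. You should plan for that structure rather than for a direct $H\ge N\cdot(\text{packets})/[2(K_TM_T+K_RM_R)]$ bound, which would in fact be weaker than the paper's when $M_R<N/2$ and is not what the averaging delivers.
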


In the following, we highlight the implications of Theorem \ref{thm:main} and its connections to some prior works:


\begin{enumerate}

\item (\textit{Within a factor of 2 characterization}) The upper bound in (\ref{eq:ach_dof}) is within a factor of 2 of the lower bound in (\ref{eq:ach_dof}). Therefore, Theorem \ref{thm:main} characterizes the one-shot linear sum-DoF of a cache-aided wireless network to within a factor of 2, for all system parameters.

\item (\textit{Aggregate cache size matters}) The one-shot linear sum-DoF characterized in Theorem \ref{thm:main} is proportional to the \emph{aggregate} cache size  that is available throughout the network, even-though these caches are \emph{isolated}. 

\item (\textit{Equal contribution of transmitter and receiver caches}) Perhaps interestingly, the caches at both sides of the network, i.e., the transmitters' side and the receivers' side, are equally valuable in the achievable one-shot linear sum-DoF of the network. Note that in practice, size of each transmitter's cache,  $M_T$, could be large. However, the number of  transmitters (e.g., base stations) $K_T$  is often small. On the other hand, size of the cache $M_R$ at the receivers (e.g., cellphones) is small, whereas the number of receivers $K_R$ is large. Therefore $K_TM_T$ could be comparable with $K_RM_R$. Our result in Theorem \ref{thm:main} shows that neither caches at the transmitters nor caches at the receivers should be ignored.

\item (\textit{Linear scaling of DoF with network size}) Letting $K_T=K_R=K$, we observe that the one-shot linear sum-DoF scales \emph{linearly} with the number of users in a \emph{fully-connected} interference channel. Note that without caches, the one-shot linear sum-DoF of a fully-connected interference channel is bounded by 2, as shown in \cite{razaviyayn2012degrees}. Hence, caching enables linear growth of the DoF without the need for more complex physical layer schemes.

\item (\textit{Role of transmitter and receiver caches}) As we will show in Section \ref{sec:ach}, in~\eqref{eq:ach_dof}, $\frac{K_TM_T}{N}$ represents the contribution of collaborative zero-forcing at the transmitters' side, and $\frac{K_RM_R}{N}$ represents the gain of canceling the known interference at the receivers' side.

\item (\textit{Connection to single-server coded caching} \cite{maddahali_caching}) A special case of our network model is the case with a single transmitter, which was previously considered in \cite{maddahali_caching}. In this case, it can be shown that a sum-DoF of $\min\left\{1+\frac{K_R M_R}{N},K_R\right\}$ is achievable, which is equivalent to the global caching gain introduced in \cite{maddahali_caching}, indicating the number of receivers in the network that can be served simultaneously, interference-free. Hence, our result subsumes the result of \cite{maddahali_caching} by generalizing it for the case of multiple transmitters.

\item (\textit{Connection to multi-server coded caching} \cite{multi_server_caching}) Another special case of our network model is the case where each transmitter has space to cache the entire library; i.e., $M_T=N$. This case was previously considered in \cite{multi_server_caching} and it can be verified that in this case, a sum-DoF of $\min\left\{K_T+\frac{K_R M_R}{N},K_R\right\}$ is achievable. Hence, our result can also be viewed as a generalization of the result in \cite{multi_server_caching} where the cache size of each transmitter may be arbitrarily smaller than the entire library size.

\end{enumerate}

\begin{remk}
In practice, the files in the library have nonuniform demands and some of them are more popular than the rest. In this case, our algorithm can be used to cache and deliver the $N$ most popular files. If a user requests one of the remaining less popular files, it can be directly served by a central base station. The parameter $N$ can be tuned, based on the popularity pattern of the contents, in order to attain the best average performance.
\end{remk}

\begin{ex}
As an illustrative example, consider a cellular network with 5 base stations as transmitters, each with a 10 TB memory and 100 cellphones as receivers, each with a 32 GB memory. Moreover, consider a library of the 1000 most popular movie titles on Netflix, each with size of 5 GB. Then, Theorem \ref{thm:main} implies that at each time, around 11 cellphones can be served simultaneously interference-free, no matter what their demands are, in contrast to the naive time-sharing scheme, where at each time only 1 cellphone can be served.\hfill\qed
\end{ex}

The rest of the paper is devoted to the proof of Theorem \ref{thm:main}. In particular, we illustrate the achievable scheme in Section \ref{sec:ach} and we present the converse argument in Section \ref{sec:converse}.


\section{Achievable Scheme}\label{sec:ach}

In this section, we prove the achievability of Theorem \ref{thm:main} by presenting an achievable scheme which utilizes the caches at the transmitters and receivers efficiently to exploit the zero-forcing and interference cancellation opportunities at the transmitters' and receivers' sides, respectively. In particular, we introduce a prefetching strategy which maximizes the gains attained by the aforementioned opportunities in the delivery phase, no matter what the receiver demands are.

We first explain our achievable scheme through a simple, illustrative example and then proceed to mention our general achievable scheme.

\subsection{Description of the Achievable Scheme via an Example}\label{sec:scheme_example}

Consider a system with $K_T=3$ transmitters and $K_R=3$ receivers, where each transmitter has space to cache $M_T=2$ files and each receiver has space to cache $M_R=1$ file. The library has $N=3$ files $W_1=A$, $W_2=B$, and $W_3=C$, each consisting of $F$ packets. 
%

In the following, we will describe the prefetching and delivery phases in detail.

\textit{Prefetching Phase:} In this phase, each file $W_n, n\in[3]$ in the library is broken into $\binom{3}{2}\binom{3}{1}=9$ disjoint subfiles $W_{n,\mathcal{T},\mathcal{R}}$ for any $\mathcal{T}\subseteq[K_T]=[3]$ and $\mathcal{R}\subseteq[K_R]=[3]$ such that $|\mathcal{T}|=2$ and $|\mathcal{R}|=1$, where each subfile consists of $F/9$ packets. Each subfile $W_{n,\mathcal{T},\mathcal{R}}$ is then stored at the caches of the two transmitters in $\mathcal{T}$ and the single receiver in $\mathcal{R}$. For example, file $A$ is broken into 9 subfiles as follows:
\begin{align*}
A_{12,1},A_{12,2},A_{12,3},A_{13,1},A_{13,2},A_{13,3},A_{23,1},A_{23,2}, A_{23,3},
\end{align*}
where $A_{12,1}$ is stored at transmitters $\text{Tx}_1$ and $\text{Tx}_2$ as well as receiver $\text{Rx}_1$, $A_{12,2}$ is stored at transmitters $\text{Tx}_1$ and $\text{Tx}_2$ as well as receiver $\text{Rx}_2$, etc. We do the same partitioning for files $B$ and $C$, as well.

It is easy to verify that each transmitter caches 6 subfiles of each file, hence the total size of its cached content is $3*(6*F/9)=2F$ packets which satisfies its memory constraint. Also, each receiver caches 3 subfiles of each file and its total cached content has size $3*(3*F/9)=F$ packets, hence satisfying its memory constraint. Note that in this phase, we are unaware of receivers' future requests.

\textit{Delivery Phase:} In this phase, each receiver reveals its request for a file in the library. Without loss of generality, assume that receivers $\text{Rx}_1$, $\text{Rx}_2$ and $\text{Rx}_3$ request files $W_{d_1}=A$, $W_{d_2}=B$ and $W_{d_3}=C$, respectively. Note that each receiver has already stored 3 subfiles of its desired file in its own cache, and therefore the transmitters need to deliver the 6 remaining subfiles of each requested file. In particular, the following 18 subfiles need to be delivered by the transmitters to the requesting receivers:
\begin{align*}
A_{12,2},A_{12,3},A_{13,2},A_{13,3},A_{23,2},A_{23,3}&\text{ to receiver Rx}_1,\\
B_{23,3},B_{13,1},B_{12,3},B_{23,1},B_{13,3},B_{12,1}&\text{ to receiver Rx}_2,\numberthis\label{eq:remaining_subfiles}\\
C_{13,1},C_{23,2},C_{23,1},C_{12,2},C_{12,1},C_{13,2}&\text{ to receiver Rx}_3.
\end{align*}


We now show that we can break the 18 subfiles in \eqref{eq:remaining_subfiles} into 6 sets, each containing 3 subfiles, such that the subfiles in each set can be delivered simultaneously to the receivers, interference-free. Such a partitioning is illustrated through the 6 steps in Figure \ref{fig:ex1delivery}, where each step takes $\frac{F}{9}$ blocks. In each step, 3 subfiles are delivered to all the receivers simultaneously, while all the inter-user interference can be eliminated. For example, in the first step, as in Figure \ref{fig:ex1delivery}--(a), subfiles $A_{12,2},B_{23,3},C_{13,1}$ are respectively delivered to receivers $\text{Rx}_1$, $\text{Rx}_2$, and $\text{Rx}_3$ at the same time. In Figure \ref{fig:ex1ZF}, we show in detail how the interference is cancelled in this step. The transmit signals of transmitters $\text{Tx}_1$, $\text{Tx}_2$ and $\text{Tx}_3$ can be respectively written as
\begin{alignat*}{2}
X_1&=-&&h_{32} \tilde{A}_{12,2}+h_{23} \tilde{C}_{13,1},\\
X_2&=&&h_{31} \tilde{A}_{12,2} - h_{13} \tilde{B}_{23,3},\\
X_3&=-&&h_{21} \tilde{C}_{13,1} + h_{12} \tilde{B}_{23,3},
\end{alignat*}
where for any subfile $W_{n,\mathcal{T},\mathcal{R}}$, $\tilde{W}_{n,\mathcal{T},\mathcal{R}}$ denotes its coded version. For simplicity, in this example, we ignore the power constraint at the transmitters. On the other hand, the received signals by receivers $\text{Rx}_1$, $\text{Rx}_2$ and $\text{Rx}_3$ can be respectively written as
\begin{align*}
Y_1&=(h_{12} h_{31} -h_{11} h_{32}) \tilde{A}_{12,2}+(h_{11} h_{23}-h_{13} h_{21}) \tilde{C}_{13,1}+Z_1,\\
Y_2&=(h_{23} h_{12} -h_{22} h_{13}) \tilde{B}_{23,3}+(h_{22} h_{31}-h_{21} h_{32}) \tilde{A}_{12,2}+Z_2,\\
Y_3&=(h_{31} h_{23} -h_{33} h_{21}) \tilde{C}_{13,1}+(h_{33} h_{12}-h_{32} h_{13}) \tilde{B}_{23,3}+Z_3.
\end{align*}

Now, note that receivers $\text{Rx}_1$, $\text{Rx}_2$ and $\text{Rx}_3$ can cancel the interference due to $C_{13,1}$, $A_{12,2}$, and $B_{23,3}$, respectively, since they already have each respective subfile in their own cache. Therefore, all the interference in the network can be effectively eliminated and the receivers will be able to decode their desired subfiles.
Likewise, one can verify that all the receivers can receive their desired subfiles interference-free in all the 6 steps of communication depicted in Figure \ref{fig:ex1delivery}.

\begin{figure}[hbt]
\centering
\includegraphics[width=\textwidth]{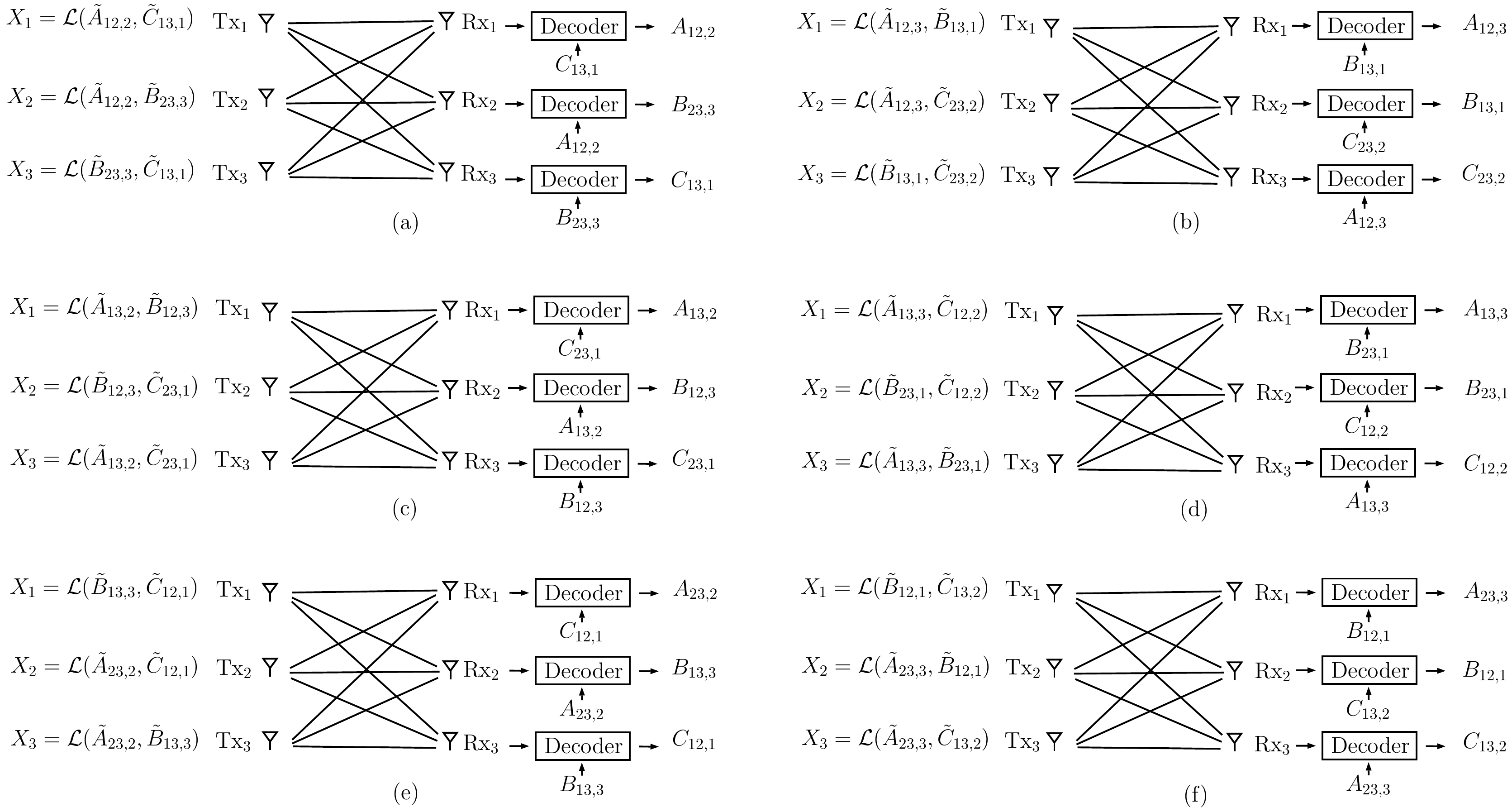}
\caption{Delivery phase for the example in Section \ref{sec:scheme_example} for respective requests of files $A$, $B$ and $C$ by receivers $\text{Rx}_1$, $\text{Rx}_2$ and $\text{Rx}_3$, where $\mathcal{L}(\alpha,\beta)$ denotes some linear combination of $\alpha$ and $\beta$. In every step, each pair of transmitters collaborate to zero-force the interference due to a specific subfile at a certain undesired receiver. Moreover, each receiver also uses its cache contents to cancel the interference due to the other interfering packet. Therefore, the communication is interference-free in all 6 steps.}
\label{fig:ex1delivery}
\end{figure}

\begin{figure}[hbt]
\centering
\includegraphics[width=.8\textwidth]{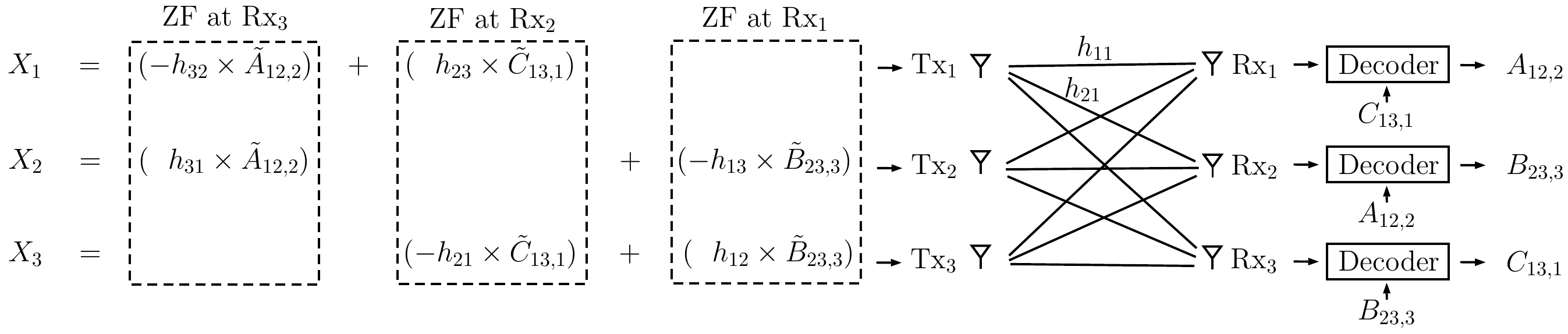}
\caption{More detailed description of the linear encoding and decoding schemes used in the delivery phase step in Figure \ref{fig:ex1delivery}--(a). In this step, $\text{Tx}_1$ and $\text{Tx}_2$ zero-force $A_{12,2}$ at $\text{Rx}_3$, $\text{Tx}_1$ and $\text{Tx}_3$ zero-force $C_{13,1}$ at $\text{Rx}_2$, and $\text{Tx}_2$ and $\text{Tx}_3$ zero-force $B_{23,3}$ at $\text{Rx}_1$. Moreover, $\text{Rx}_1$, $\text{Rx}_2$ and $\text{Rx}_3$ can cancel the interference due to $C_{13,1}$, $A_{12,2}$, and $B_{23,3}$, respectively, since they already have each respective subfile in their own cache.}
\label{fig:ex1ZF}
\end{figure}

Consequently, the 18 subfiles in \eqref{eq:remaining_subfiles}, each of which consists of $F/9$ packets, are delivered to the receivers in 6 steps, each consisting of $F/9$ blocks. Note that our particular file splitting pattern in the prefetching phase and the particular scheduling pattern in the delivery phase allows us to maximally exploit the two gains of zero-forcing the outgoing interference on the transmitters' side and canceling the known interference on the receivers' side, no matter what the receiver demands are in the delivery phase. Therefore, the sum-DoF of $\frac{18*F/9}{6*F/9}=3=\min\left\{\frac{K_T M_T+K_R M_R}{N},K_R\right\}$ is achievable in this network.


\subsection{Description of the General Achievable Scheme}\label{sec:ach_general}

Our general achievable scheme is given in Algorithm \ref{ach_alg}. In this algorithm, we use the notation
\begin{align}
t_T\triangleq  \frac{K_T M_T}{N},~~ t_R\triangleq \frac{K_R M_R}{N},
\end{align}
and for now, we assume that $t_T$ and $t_R$ are integers. Recall that in the example in Section \ref{sec:scheme_example}, $t_T=2$ and $t_R=1$.

In the following, we will describe the prefetching and delivery phases in more detail.
\begin{algorithm}
\caption{Achievable scheme for Theorem \ref{thm:main}}\label{ach_alg}
\begin{algorithmic}[1]
\Statex \textit{Prefetching Phase:}
\State \textbf{for} $n=1,...,N$
\State \quad Partition $W_n$ into $\binom{K_T}{t_T}\binom{K_R}{t_R}$ disjoint subfiles
${\left\{W_{n,\mathcal{T},\mathcal{R}}\right\}_{\mathcal{T}\subseteq[K_T], |\mathcal{T}|=t_T, \mathcal{R}\subseteq[K_R], |\mathcal{R}|=t_R}}$ of equal sizes.
\State \textbf{end}
\State \textbf{for} $i=1,...,K_T$
\State \quad \quad $\text{Tx}_i$ caches all $W_{n,\mathcal{T},\mathcal{R}}$ for which $i\in\mathcal{T}$.
\State \textbf{end}
\State \textbf{for} $j=1,...,K_R$
\State \quad \quad $\text{Rx}_j$ caches all $W_{n,\mathcal{T},\mathcal{R}}$ for which $j\in\mathcal{R}$.
\State \textbf{end}

\Statex

\Statex \textit{Delivery Phase:}
\State \textbf{for} $j \in [K_R]$
\State  \quad \textbf{for} ${\mathcal{T}\subseteq[K_T] \text{ s.t. } |\mathcal{T}|=t_T}$
\State \quad \quad \textbf{for} ${\mathcal{R}\subseteq[K_R] \setminus \{j\} \text{ s.t. } |\mathcal{R}|=t_R}$
\State \quad \quad \quad partition $W_{d_j,\mathcal{T},\mathcal{R}}$ to $\frac{t_R![K_R-(t_R+1)]!}{[K_R-(t_R+t_T)]!}$ disjoint subfiles $\Big\{ W_{d_j,\mathcal{T},\pi,\pi'} \Big\}_{\subalign{ &\pi\in \Pi_{\mathcal{R}} \\ &\pi' \in \Pi_{[K_R]\setminus (\mathcal{R}\cup \{j\}),t_T-1}}}$ of equal sizes.
\State \quad  \quad \textbf{end}
\State  \quad \textbf{end}
\State  \textbf{end}

\State \textbf{for} ${\mathcal{T}\subseteq[K_T] \text{ s.t. } |\mathcal{T}|=t_T}$
\State \quad \textbf{for} ${\mathcal{R}\subseteq[K_R] \text{ s.t. } |\mathcal{R}|=t_T+t_R}$
\State \quad \quad \textbf{for}
$\pi \in \Pi_{\mathcal{R}}^{\text{circ}}$
\State \quad \quad \quad Each transmitter $\text{Tx}_i$ transmits a linear combination of the coded subfiles as in
\Statex \quad \quad \quad $X_i=\mathcal{L}_{i,\mathcal{T},\pi}\left( \left\{\tilde{W}_{d_{\pi(l)},\mathcal{T}\oplus_{K_T}(l-1),\pi[l+1:l+t_R],\pi[l+t_R+1:l+t_R+t_T-1]}: l\in[t_T+t_R], i \in \mathcal{T} \oplus_{K_T} (l-1)\right\} \right)$
\Statex \quad \quad \quad using the linear combinations shown in Lemma \ref{lem:sch_packets} such that the subfiles
\Statex \quad \quad \quad $\left\{W_{d_{\pi(l)},\mathcal{T}\oplus_{K_T}(l-1),\pi[l+1:l+t_R],\pi[l+t_R+1:l+t_R+t_T-1]}: l\in[t_T+t_R]\right\}$ 
\Statex \quad \quad \quad  are simultaneously delivered to the receivers in $\mathcal{R}$ interference-free.
\State \quad \quad \textbf{end}
\State \quad \textbf{end}
\State \textbf{end}
\end{algorithmic}
\end{algorithm}



\subsubsection{Prefetching Phase}\label{sec:caching}
For any file $W_n$ in the library, $n\in[N]$, we partition it into $\binom{K_T}{t_T}\binom{K_R}{t_R}$ disjoint subfiles of equal sizes\footnote{Due to the assumption that $F$ is sufficiently large, we can assume that it is an integer multiple of $\binom{K_T}{t_T}\binom{K_R}{t_R}$.}, denoted by
\begin{align}\label{eq:subfile}
W_n=\Big\{W_{n,\mathcal{T},\mathcal{R}}\Big\}_{\substack{\mathcal{T}\subseteq[K_T]: |\mathcal{T}|=t_T\\ \mathcal{R}\subseteq[K_R]: |\mathcal{R}|=t_R}}.
\end{align}

Based on the above partitioning, in the prefetching phase, each transmitter $\text{Tx}_i$ stores a subset $\mathcal{P}_i$ of the packets in the library as described below.
\begin{align}\label{eq:txcache}
\mathcal{P}_i= \left\{W_{n,\mathcal{T},\mathcal{R}}:i\in\mathcal{T}\right\}.
\end{align}


\begin{illust}
For instance, in the example network considered in Section \ref{sec:scheme_example}, transmitter $\text{Tx}_3$ stores the following subset of packets in its cache.
\begin{align*}
\mathcal{P}_3=
\{&W_{1,13,1},W_{1,13,2},W_{1,13,3},W_{1,23,1},W_{1,23,2},W_{1,23,3},\\
&W_{2,13,1},W_{2,13,2},W_{2,13,3},W_{2,23,1},W_{2,23,2},W_{2,23,3},\\
&W_{3,13,1},W_{3,13,2},W_{3,13,3},W_{3,23,1},W_{3,23,2},W_{3,23,3}\}\\
=\{&A_{13,1},A_{13,2},A_{13,3},A_{23,1},A_{23,2},A_{23,3},\\
&B_{13,1},B_{13,2},B_{13,3},B_{23,1},B_{23,2},B_{23,3},\\
&C_{13,1},C_{13,2},C_{13,3},C_{23,1},C_{23,2},C_{23,3}\}.\tag*{\qed}
\end{align*}
\end{illust}

Based on the above caching strategy, we can verify that the total number of packets cached by transmitter $\text{Tx}_i$ equals
\begin{align*}
N\binom{K_T-1}{t_T-1}\binom{K_R}{t_R}\frac{F}{\binom{K_T}{t_T}\binom{K_R}{t_R}}=NF\frac{t_T}{K_T}=M_T F \text{ packets},
\end{align*}
hence satisfying its memory size constraint, where $\binom{K_T-1}{t_T-1}$ is the number of subsets $\mathcal{T}\subseteq[K_T]$ of size $t_T$ which include the transmitter index $i$.

Likewise, in the prefetching phase, each receiver $\text{Rx}_j$ stores a subset $\mathcal{Q}_j$ of the packets in the library as described below.
\begin{align}\label{eq:Q_j}
\mathcal{Q}_j= \left\{W_{n,\mathcal{T},\mathcal{R}}:j\in\mathcal{R}\right\}.
\end{align}

\begin{illust}
For instance, in the example network considered in Section \ref{sec:scheme_example}, receiver $\text{Rx}_2$ stores the following subset of packets in its cache.
\begin{align*}
\mathcal{Q}_2&=\{W_{1,12,2},W_{1,13,2},W_{1,23,2},
W_{2,12,2},W_{2,13,2},W_{2,23,2},
W_{3,12,2},W_{3,13,2},W_{3,23,2}\}\\
&=\{A_{12,2},A_{13,2},A_{23,2},
B_{12,2},B_{13,2},B_{23,2},
C_{12,2},C_{13,2},C_{23,2}\}.\tag*{\qed}
\end{align*}
\end{illust}

This suggests that the total number of packets cached by receiver $\text{Rx}_j$ is equal to
\begin{align*}
N\binom{K_T}{t_T}\binom{K_R-1}{t_R-1}\frac{F}{\binom{K_T}{t_T}\binom{K_R}{t_R}}=NF\frac{t_R}{K_R}=M_R F \text{ packets},
\end{align*}
which also satisfies its memory size constraint.

\subsubsection{Delivery Phase}\label{sec:delivery}
In this section, we first describe the delivery phase for the case where $t_T+t_R\leq K_R$, so that the first term in the lower bound in (\ref{eq:ach_dof}) is dominant. We will later show how to deal with the case where $t_T+t_R> K_R$.

In the delivery phase, the receiver requests are revealed, and in particular, each receiver $\text{Rx}_j, j\in[K_R]$ requests a file $W_{d_j}$ from the library and the transmitters need to deliver the subfiles in
\begin{align*}
\{W_{d_j,\mathcal{T},\mathcal{R}}: j\notin \mathcal{R}\}
\end{align*}
to receiver $\text{Rx}_j$; i.e., the subfiles of file $W_{d_j}$ which have not been already stored in the cache of receiver $\text{Rx}_j$.

In the following, our goal is to show that the set of packets which need to be delivered to the receivers can be partitioned into subsets of size $t_T+t_R$ such that the packets in each subset can be scheduled together. To this end, we need to further break each subfile to smaller subfiles. In particular, for any $j\in[K_R], \mathcal{T}\subseteq[K_T]$ s.t. $|\mathcal{T}|=t_T, \mathcal{R}\subseteq[K_R]\setminus\{j\}$ s.t. $|\mathcal{R}|=t_R$, we partition $W_{d_j,\mathcal{T},\mathcal{R}}$ to $\frac{t_R![K_R-(t_R+1)]!}{[K_R-(t_R+t_T)]!}$ smaller disjoint subfiles of equal sizes denoted by
\begin{align}\label{eq:subfile_further_partition}
W_{d_j,\mathcal{T},\mathcal{R}}=\Big\{ W_{d_j,\mathcal{T},\pi,\pi'} \Big\}_{\subalign{ &\pi\in \Pi_{\mathcal{R}} \\ &\pi' \in \Pi_{[K_R]\setminus (\mathcal{R}\cup \{j\}),t_T-1}}},
\end{align}
where for a set $\mathcal{S}$, $\Pi_{\mathcal{S}}$ denotes the set of permutations of $\mathcal{S}$, and for any $t\in\{1,...,|\mathcal{S}|\}$, $\Pi_{\mathcal{S},t}$ denotes the set of all permutations of all subsets of $\mathcal{S}$ of size $t$; i.e.,
\begin{align*}
\Pi_{\mathcal{S},t}=\bigcup_{\mathcal{A}\subseteq \mathcal{S}, |\mathcal{A}|=t} \Pi_{\mathcal{A}}.
\end{align*}

\begin{remk}
Note that in the example setting discussed in Section \ref{sec:scheme_example}, $\frac{t_R![K_R-(t_R+1)]!}{[K_R-(t_R+t_T)]!}=1$, which implies that further partitioning of the subfiles is not needed.
\end{remk}

The advantage of further breakdown of the subfiles in \eqref{eq:subfile_further_partition} is that we can now partition the set of the subfiles which need to be delivered to the receivers into certain subsets of size $t_T+t_R$ such that each subfile $W_{d_j,\mathcal{T},\pi,\pi'}$ intended for receiver $\text{Rx}_j$ is zero-forced at the receivers with indices in $\pi'$. Moreover, since this subfile is also already cached at the receivers with indices in $\pi$, the communication will be interference-free for each set of the $t_T+t_R$ subfiles.

We show how to do such a partitioning in Lemma \ref{lem:partition_packets}. In this lemma, we use the following notation: For a set $\mathcal{R}$, we let $\Pi_{\mathcal{R}}^{\text{circ}}$ denote the set of $(|\mathcal{R}|-1)!$ circular permutations of $\mathcal{R}$.\footnote{A circular permutation of a set $\mathcal{R}$ is a way of arranging the elements of $\mathcal{R}$ around a fixed circle. The number of distinct circular permutations of a set $\mathcal{R}$ is equal to $(|\mathcal{R}|-1)!$. For example, if $\mathcal{R}=\{1,2,3\}$, then $\Pi_{\mathcal{R}}^{\text{circ}}=\{[1,2,3],[1,3,2]\}$.} Moreover, for a set $\mathcal{S}$, a permutation $\pi \in \Pi_{\mathcal{S}}$ and two integers $i,j$ satisfying $j\geq i$, we define $\pi[i:j]$ as
\begin{align*}
\pi[i:j]=[\pi(i \: \oplus_{|\mathcal{S}|} 0) ~~ \pi(i \: \oplus_{|\mathcal{S}|} 1) ~~ \pi(i \: \oplus_{|\mathcal{S}|} 2) ~~ ... ~~ \pi(i \: \oplus_{|\mathcal{S}|} (j-i))],
\end{align*}
where for an integer $m$, $i \: \oplus_{m} j$ is defined as
\begin{align}\label{eq:mod_sum_def}
i \: \oplus_{m} j= 1+ (i+j-1 \mod m).
\end{align}

Finally, for a set $\mathcal{T}$ and an integer $j$, we let $\mathcal{T}\oplus_{m} j$ denote entry-wise addition of elements of $\mathcal{T}$ with $j$ modulo $m$, as defined in \eqref{eq:mod_sum_def}.

\begin{lem}\label{lem:partition_packets}
Given the prefetching phase in Section \ref{sec:caching}, for any receivers' demand vector $\mathbf{d}$, the set of subfiles which need to be delivered to the receivers can be partitioned into disjoint subsets of size $t_T+t_R$ as
\begin{align}\label{eq:subfile_perm}
\bigcup_{\substack{\mathcal{T}\subseteq[K_T]: |\mathcal{T}|=t_T \\ 
\mathcal{R}\subseteq[K_R]: |\mathcal{R}|=t_T+t_R \\
\pi \in \Pi_{\mathcal{R}}^{\emph{circ}}
}}
\left\{W_{d_{\pi(l)},\mathcal{T}\oplus_{K_T}(l-1),\pi[l+1:l+t_R],\pi[l+t_R+1:l+t_R+t_T-1]}: l\in[t_T+t_R]\right\}.
\end{align}
\end{lem}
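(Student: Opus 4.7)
The plan is to show that the map $\Phi$ that sends each index tuple $(\mathcal{T},\mathcal{R},\pi,l)$ appearing in~\eqref{eq:subfile_perm} to the subfile $W_{d_{\pi(l)},\,\mathcal{T}\oplus_{K_T}(l-1),\,\pi[l+1:l+t_R],\,\pi[l+t_R+1:l+t_R+t_T-1]}$ is a bijection onto the collection of subfiles from~\eqref{eq:subfile_further_partition} that must still be delivered. Both the disjointness of the union in~\eqref{eq:subfile_perm} and its equality with the to-be-delivered set then follow at once.

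For well-definedness of $\Phi$, the key point is that the $t_R+t_T$ consecutive indices $l, l+1,\ldots, l+t_R+t_T-1$ are all distinct modulo $|\mathcal{R}|=t_R+t_T$, so the values $\pi(l),\pi(l+1),\ldots,\pi(l+t_R+t_T-1)$ are pairwise distinct elements of $\mathcal{R}\subseteq[K_R]$. Consequently $\pi[l+1:l+t_R]$ is a permutation of some $t_R$-subset of $[K_R]\setminus\{\pi(l)\}$, and $\pi[l+t_R+1:l+t_R+t_T-1]$ is a permutation of a $(t_T-1)$-subset of $[K_R]$ disjoint from $\{\pi(l)\}\cup\pi[l+1:l+t_R]$---exactly the format of~\eqref{eq:subfile_further_partition} with intended receiver $j=\pi(l)$. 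Since $\mathcal{T}\oplus_{K_T}(l-1)$ is also a $t_T$-subset of $[K_T]$, the image is a legitimate to-be-delivered subfile.

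To invert $\Phi$, I would take any to-be-delivered subfile $W_{d_j,\mathcal{T}',\pi'',\pi'''}$ and concatenate $j$, the entries of $\pi''$, and the entries of $\pi'''$ into a single $(t_R+t_T)$-tuple of distinct elements of $[K_R]$; its underlying set is the required $\mathcal{R}$, and the tuple itself represents a unique circular permutation $\pi\in\Pi_{\mathcal{R}}^{\text{circ}}$. The position $l$ of $j$ in the canonical representative of $\pi$ is then fixed, and $\mathcal{T}:=\mathcal{T}'\oplus_{K_T}(-(l-1))$ completes the preimage. Both compositions $\Phi\circ\Phi^{-1}$ and $\Phi^{-1}\circ\Phi$ reduce to directly unwinding the cyclic indexing convention for $\pi[i:j]$, and as an additional sanity check both the indexing set of~\eqref{eq:subfile_perm} and the set of subfiles to be delivered have cardinality $\binom{K_T}{t_T}\,K_R!/(K_R-t_R-t_T)!$. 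The main difficulty I anticipate is recognizing that the triple $(j,\pi'',\pi''')$ encodes the entire cyclic order of $\pi$, so that $\pi$, $\mathcal{R}$, and the starting position $l$ can all be recovered from a single subfile label; being precise about canonical representatives of circular permutations when inverting $\Phi$ is the only place the bookkeeping becomes delicate.
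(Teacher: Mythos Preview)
Your proposal is correct and follows the same overall outline as the paper: first verify that every subfile produced by~\eqref{eq:subfile_perm} is a legitimate to-be-delivered subfile (your well-definedness check), then establish that the correspondence is a bijection. The difference is in how bijectivity is argued. The paper simply counts the number of index tuples $(\mathcal{T},\mathcal{R},\pi,l)$ and the number of to-be-delivered subfiles, observes they are both $\binom{K_T}{t_T}\binom{K_R}{t_T+t_R}(t_T+t_R)!$, and concludes. You instead construct the inverse $\Phi^{-1}$ explicitly by reading off $\mathcal{R}$, $\pi$, and $l$ from the concatenation $(j,\pi'',\pi''')$ viewed as a circular permutation, and then recovering $\mathcal{T}$ by undoing the cyclic shift. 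Your route is slightly more work but is also more self-contained: the paper's counting argument tacitly assumes that distinct index tuples produce distinct subfiles, which is exactly what your explicit inverse verifies. The cardinality check you include at the end is then redundant but harmless as a sanity check.
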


\begin{proof}
See Appendix \ref{apx:proof_lem_partition}.
\end{proof}

\begin{illust}
For the example network mentioned in Section \ref{sec:scheme_example}, the set of 18 subfiles which need to be delivered to the receivers, as in \eqref{eq:remaining_subfiles}, can be partitioned to the following 6 sets.
\begin{align}
&\{A_{12,2},B_{23,3},C_{13,1}\} \cup
\{A_{12,3},B_{13,1},C_{23,2}\} \cup
\{A_{13,2},B_{12,3},C_{23,1}\} \nonumber \\ \cup \
&\{A_{13,3},B_{23,1},C_{12,2}\} \cup
\{A_{23,2},B_{13,3},C_{12,1}\} \cup
\{A_{23,3},B_{12,1},C_{13,2}\}.
\end{align}
\end{illust}

Based on the partitioning of the small subfiles that need to be delivered to the receivers in Lemma \ref{lem:partition_packets}, we will have $\binom{K_T}{t_T}\binom{K_R}{t_T+t_R} (t_T+t_R-1)!$ steps of communication, where at each step, specific sets $\mathcal{T}$ and $\mathcal{R}$ and a permutation $\pi$ are fixed as in \eqref{eq:subfile_perm}, and each transmitter $\text{Tx}_i$ will transmit a linear combination of the coded subfiles for which $i \in \mathcal{T} \oplus_{K_T} (l-1)$; i.e.,
\begin{align}\label{eq:tx_i_lin_comb}
X_i=\mathcal{L}_{i,\mathcal{T},\pi}\left( \left\{\tilde{W}_{d_{\pi(l)},\mathcal{T}\oplus_{K_T}(l-1),\pi[l+1:l+t_R],\pi[l+t_R+1:l+t_R+t_T-1]}: l\in[t_T+t_R], i \in \mathcal{T} \oplus_{K_T} (l-1)\right\} \right),
\end{align}
where for any subfile $W_{d_j,\mathcal{T},\pi,\pi'}$, $\tilde{W}_{d_j,\mathcal{T},\pi,\pi'}$ denotes the corresponding coded subfile containing PHY coded symbols, and $\mathcal{L}_{i,\mathcal{T},\pi}(.)$ represents the linear combination that transmitter $\text{Tx}_i$ chooses for sending the subfiles in \eqref{eq:tx_i_lin_comb}.

We will next show that under such a delivery scheme, there always exists a choice of linear combinations at the transmitters so that at each step, the communication will be interference-free and all the $t_T+t_R$ receivers in $\mathcal{R}$ can decode their desired packets, as we also showed in the example setting in Section \ref{sec:scheme_example}.



\begin{lem}\label{lem:sch_packets}
For any subset of $t_T$ transmitters $\mathcal{T}\subseteq [K_T]$, any subset of $t_T+t_R$ receivers $\mathcal{R}\subseteq[K_R]$, and any circular permutation $\pi \in \Pi_{\mathcal{R}}^{\emph{circ}}$, there exists a choice of the linear combinations $\{\mathcal{L}_{i,\mathcal{T},\pi}(.)\}_{i=1}^{K_T}$ in \eqref{eq:tx_i_lin_comb} such that the set of $t_T+t_R$ subfiles in
\begin{align}
\left\{W_{d_{\pi(l)},\mathcal{T}\oplus_{K_T}(l-1),\pi[l+1:l+t_R],\pi[l+t_R+1:l+t_R+t_T-1]}: l\in[t_T+t_R]\right\},
\end{align}
can be delivered simultaneously and interference-free by the transmitters in $\bigcup\limits_{l\in[t_T+t_R]} \big(\mathcal{T}\oplus_{K_T}(l-1)\big)$ to the receivers in $\mathcal{R}$.
\end{lem}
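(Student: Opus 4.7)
The plan is to exhibit, for each of the $t_T+t_R$ scheduled subfiles, a cooperative beamforming vector whose support is the $t_T$ transmitters that cache it, whose direction lies in a one-dimensional null space that zero-forces the $t_T-1$ ``unprotected'' receivers, and then to verify that every other receiver either has the offending subfile in its own cache or has been nulled. Concretely, for each $l\in[t_T+t_R]$, set $\mathcal{T}_l\triangleq \mathcal{T}\oplus_{K_T}(l-1)$, and write the transmit signal of $\text{Tx}_i$ (for $i\in\mathcal{T}_l$) as the sum over $l$ of $v_{i,l}\,\tilde W_{d_{\pi(l)},\mathcal{T}_l,\pi[l+1:l+t_R],\pi[l+t_R+1:l+t_R+t_T-1]}$, with $v_{i,l}=0$ whenever $i\notin\mathcal{T}_l$. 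The key observation is that for every pair $l'\neq l$, the receiver $\pi(l')$ is at some nonzero circular distance $d\in\{1,\dots,t_T+t_R-1\}$ from $\pi(l)$ along $\pi$; if $d\leq t_R$ then $\pi(l')\in\pi[l+1:l+t_R]$ and receiver $\pi(l')$ has subfile $l$ in its cache, whereas if $t_R<d\leq t_R+t_T-1$ then $\pi(l')\in\pi[l+t_R+1:l+t_R+t_T-1]$ and must be zero-forced.

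For each $l$, choose $\mathbf{v}_l=(v_{i,l})_{i\in\mathcal{T}_l}\in\mathbb{C}^{t_T}$ to lie in the null space of the $(t_T-1)\times t_T$ submatrix $\mathbf{H}_l$ whose rows are the channel gains from the transmitters in $\mathcal{T}_l$ to the $t_T-1$ receivers in $\pi[l+t_R+1:l+t_R+t_T-1]$. Since the channel coefficients are drawn from a continuous distribution, $\mathbf{H}_l$ has full row rank $t_T-1$ almost surely, so its right null space is exactly one-dimensional, and we take any nonzero vector there. With this choice, the coded subfile $\tilde W_{d_{\pi(l)},\cdot,\cdot,\cdot}$ contributes zero at each receiver in $\pi[l+t_R+1:l+t_R+t_T-1]$; at each receiver in $\pi[l+1:l+t_R]$ it arrives with some (generically nonzero) coefficient but those receivers have the subfile in $\tilde{\mathcal{Q}}_{\pi(l')}$ by the prefetching rule, so they include it in the linear combination $\mathcal{L}_{\pi(l'),m}(\cdot)$ to cancel it out, matching the decoding condition in \eqref{eq:decoding}.

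It remains to check that the intended receiver $\pi(l)$ actually sees its desired subfile with a nonzero effective coefficient. Let $\mathbf{h}_l\in\mathbb{C}^{t_T}$ be the channel vector from $\mathcal{T}_l$ to $\pi(l)$; after interference cancellation at $\pi(l)$ (of the $t_R$ subfiles it has cached) and zero-forcing (of the $t_T-1$ subfiles whose nulling is handled by the other beamforming vectors, since each of those zero-forces at $\pi(l)$ whenever it is one of their $t_T-1$ nulled receivers), the only surviving contribution toward $\pi(l)$ from subfile $l$ is $\mathbf{h}_l^\top \mathbf{v}_l$. The vector $\mathbf{v}_l$ is a fixed polynomial in the entries of $\mathbf{H}_l$ (e.g., obtained by Cramer's rule on the null-space system), so $\mathbf{h}_l^\top \mathbf{v}_l$ is a nontrivial polynomial in the channel coefficients, and hence nonzero for almost every channel realization. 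Applying this genericity argument simultaneously to the finitely many $l\in[t_T+t_R]$ yields the conclusion.

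The main obstacle is the last step: ensuring that picking each $\mathbf{v}_l$ to null one set of receivers does not accidentally null the signal at the intended receiver $\pi(l)$ as well. This is what the genericity-of-channels argument addresses, but one has to write the determinant explicitly (or invoke a Schwartz--Zippel type statement) to be sure the polynomial is not identically zero; the cleanest way is to exhibit a single channel realization---for instance, a small perturbation of the identity---for which $\mathbf{h}_l^\top \mathbf{v}_l\neq 0$, which then forces nonvanishing almost everywhere.
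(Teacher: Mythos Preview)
Your proposal is correct and essentially the same as the paper's proof. The paper writes down the linear system \eqref{eq:nulling_ach_1}--\eqref{eq:nulling_ach_2}, observes that it has $t_T(t_T+t_R)$ equations in as many unknowns, and concludes that a solution exists; since those equations decouple over~$l$ into $t_T\times t_T$ blocks, this is precisely your per-$l$ null-space construction together with the normalization $\mathbf{h}_l^\top \mathbf{v}_l=1$. Your treatment of the genericity step (why the intended receiver sees a nonzero coefficient, i.e., why the square system is actually invertible) is more explicit than the paper's, which simply equates the count of equations and unknowns without justifying invertibility.
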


\begin{proof}
For ease of notation and without loss of generality, assume $$\mathcal{T}=\{1,...,t_T\},\mathcal{T}\oplus_{K_T}(l-1)=\{l,...,t_T+l\},\mathcal{R}=\{1,...,t_T+t_R\},\pi=[1,...,t_T+t_R].$$

First, we need to determine the subset of the subfiles which is available at each transmitter. It is easy to verify that


\begin{itemize}
\item If $i\in \{1,...,t_T-1\}$, then transmitter $\text{Tx}_i$ has subfiles
\begin{align}
\left\{\tilde{W}_{d_{\pi(l)},\mathcal{T}\oplus_{K_T}(l-1),\pi[l+1:l+t_R],\pi[l+t_R+1:l+t_R+t_T-1]}: l\in \{1,...,i\} \right\};
\end{align}

\item If $i\in \{t_T,...,t_T+t_R\}$, then transmitter $\text{Tx}_i$ has subfiles
\begin{align}
\left\{\tilde{W}_{d_{\pi(l)},\mathcal{T}\oplus_{K_T}(l-1),\pi[l+1:l+t_R],\pi[l+t_R+1:l+t_R+t_T-1]}: l\in \{i-t_T+1,...,i\} \right\};
\end{align}

\item and if $i\in \{t_T+t_R+1,...,2t_T+t_R-1\}$, then transmitter $\text{Tx}_i$ has subfiles
\begin{align}
\left\{\tilde{W}_{d_{\pi(l)},\mathcal{T}\oplus_{K_T}(l-1),\pi[l+1:l+t_R],\pi[l+t_R+1:l+t_R+t_T-1]}: l\in \{i-t_T+1,...,t_T+t_R\} \right\}.
\end{align}
\end{itemize}

Since each transmitter sends a linear combination of the subfiles that it has, the transmit signal of transmitter $\text{Tx}_i$ can be written as
\begin{align}
X_i=\begin{cases}
\ \ \: \: \sum\limits_{l=1}^{i} \ \ \: \: v_{i,l} \tilde{W}_{d_l,\{l,...,t_T+l\},\{l+1 ,...,l+t_R\},\{l+t_R+1 ,...,l+t_R+t_T-1\}},  & \text{ if } i\in \{1,...,t_T-1\}\\
\sum\limits_{l=i-t_T+1}^{i} v_{i,l} \tilde{W}_{d_l,\{l,...,t_T+l\},\{l+1 ,...,l+t_R\},\{l+t_R+1 ,...,l+t_R+t_T-1\}},  & \text{ if } i\in \{t_T,...,t_T+t_R\}\\
\sum\limits_{l=i-t_T+1}^{t_T+t_R} v_{i,l} \tilde{W}_{d_l,\{l,...,t_T+l\},\{l+1 ,...,l+t_R\},\{l+t_R+1 ,...,l+t_R+t_T-1\}},  & \text{ if } i\in \{t_T+t_R+1,...,2t_T+t_R-1\}
\end{cases}.
\end{align}

This implies that the received signal at receiver $\text{Rx}_j, j\in \{1,...,t_T+t_R\}$ can be written as
\begin{align}
Y_j&=\sum_{i=1}^{2t_T+t_R-1} h_{ji} X_i + Z_j\\
&=\sum_{i=j}^{t_T+j} h_{ji} v_{i,j} \tilde{W}_{d_j,\{j,...,t_T+j\},\{j+1 ,...,j+t_R\},\{j+t_R+1 ,...,j+t_R+t_T-1\}}\nonumber\\
&\quad+\sum_{l=j+1}^{j+t_T-1}\sum_{i=l}^{t_T+l} h_{ji} v_{i,l} \tilde{W}_{d_l,\{l,...,t_T+l\},\{l+1 ,...,l+t_R\},\{l+t_R+1 ,...,l+t_R+t_T-1\}}\nonumber\\
&\quad+\sum_{l=j-t_R}^{j-1}\sum_{i=l}^{t_T+l} h_{ji} v_{i,l} \tilde{W}_{d_l,\{l,...,t_T+l\},\{l+1 ,...,l+t_R\},\{l+t_R+1 ,...,l+t_R+t_T-1\}}+Z_j.\label{eq:rx_signal_ach}
\end{align}

Now, note that in \eqref{eq:rx_signal_ach}, the first term corresponds to the desired subfile of receiver $\text{Rx}_j$, while the second and third terms correspond to the undesired subfiles whose interference needs to be canceled at this receiver. However, note that the subfiles in the third term are already cached at receiver $\text{Rx}_j$ and hence it is able to cancel their incoming interference. Hence, in order for all receivers $\text{Rx}_j, j\in \{1,...,t_T+t_R\}$ to receive their subfiles interference-free, there should exist a choice of linear combination coefficients $\{v_{i,l}\}$ such that
\begin{align}
\sum_{i=j}^{t_T+j} h_{ji} v_{i,j} &= 1, \forall j\in \{1,...,t_T+t_R\}\label{eq:nulling_ach_1} \\
\sum_{i=l}^{t_T+l} h_{ji} v_{i,l} &= 0, \forall j\in \{1,...,t_T+t_R\}, \forall l\in \{j+1 ,...,j+t_T-1\}.\label{eq:nulling_ach_2}
\end{align}

Equations \eqref{eq:nulling_ach_1}-\eqref{eq:nulling_ach_2} introduce a system of $t_T(t_T+t_R)$ linear equations. On the other hand, the number of variables $\{v_{i,l}\}$ is also equal to $t_T(t_T+t_R)$. This indicates that there always exists a choice of linear combination coefficients $\{v_{i,l}\}$ such that \eqref{eq:nulling_ach_1}-\eqref{eq:nulling_ach_2} are satisfied. Finally, note that by scaling all the transmit signals by a large enough factor, the power constraint at all the transmitters can also be satisfied. Hence the proof is complete.
\end{proof}

\begin{remk}
As mentioned in Section \ref{sec:model}, we assume that the channel gains remain constant over the course of communication. However, for the delivery scheme presented in the proof of Lemma \ref{lem:sch_packets}, this assumption can be relaxed, since we only need the channel gains to remain unchanged for each block of communication and they can be allowed to vary among different blocks.
\end{remk}

\begin{remk}
In the delivery scheme presented in the proof of Lemma \ref{lem:sch_packets}, we only used zero-forcing at the transmitters in order to cancel their outgoing interference, which is DoF-optimal. However, in general one can use any scheme that exploits the collaboration among the transmitters in order to optimize the actual rates in the finite-SNR regime (such as the schemes suited for the MIMO broadcast channels~\cite{MIMOBCshamai}).
\end{remk}

\subsection{Analysis of the Sum-DoF of the Proposed Achievable Scheme}

As a result of Lemmas \ref{lem:partition_packets} and \ref{lem:sch_packets}, it is clear that for any set of receiver demands in the delivery phase, we can schedule all the requested subfiles in groups of size $t_T+t_R$. Now, if $t_T$ and/or $t_R$ are not integers, we can split the memories and the files proportionally so that for each new partition, the aforementioned scheme can be applied for updated $t_T$ and $t_R$ which are integers. Hence, combining the schemes over different partitions allows us to serve $t_T+t_R$ simultaneously, interference-free, for any values of $t_T$ and $t_R$ such that $t_T+t_R\leq K_R$.\footnote{In \cite{maddahali_caching}, this method is referred to as \emph{memory-sharing}, which resembles time-sharing in network information theory.} 

Finally, if $t_T+t_R>K_R$, then since we cannot serve more than $K_R$ receivers, we can neglect some of the caches at either the transmitters' side or the receivers' side and use a fraction of the caches with new sizes ${\frac{N}{K_T}\leq M'_T\leq M_T}$ and ${M'_R\leq M_R}$ so that ${\frac{K_T M'_T+K_R M'_R}{N}=K_R}$. We can then use Algorithm \ref{ach_alg} to serve all the $K_R$ receivers simultaneously without interference.

As we showed in Section \ref{sec:caching}, our prefetching phase respects the cache size constraint of all the transmitters and receivers. Moreover, given our prefetching phase, each receiver $\text{Rx}_j$ caches $\binom{K_T}{t_T}\binom{K_R-1}{t_R-1}\frac{F}{\binom{K_T}{t_T}\binom{K_R}{t_R}}=\frac{M_R}{N} F$ packets of each file in the library. Hence, for each set of requested files by the receivers, a total of $K_R \left(1-\frac{M_R}{N}\right) F$ packets need to be delivered by the transmitters to the receivers.


Therefore, based on the delivery phase mentioned in Section \ref{sec:ach_general}, the number of blocks required to deliver all the $K_R \left(1-\frac{M_R}{N}\right) F$ packets to the receivers is equal to $\frac{K_R \left(1-\frac{M_R}{N}\right) F}{\min\{t_T+t_R,K_R\}}$. This suggests that for any set of receiver demands, sum-DoF of $$\frac{K_R \left(1-\frac{M_R}{N}\right) F}{\frac{K_R \left(1-\frac{M_R}{N}\right) F}{\min\{t_T+t_R,K_R\}}}=\min\{t_T+t_R,K_R\}=\min\left\{\frac{K_T M_T+K_R M_R}{N},K_R\right\}$$ is achievable, hence completing the proof of achievability of Theorem \ref{thm:main}.

\section{Converse}\label{sec:converse}
In this section, we prove the converse of Theorem \ref{thm:main}. In particular, we show that the lower bound on the one-shot linear sum-DoF in (\ref{eq:ach_dof}) is within a factor of 2 of the optimal one-shot linear sum-DoF. In order to prove the converse, we take four steps as detailed in the following sections. First, we demonstrate how in each block of communication, the network can be converted into a virtual MISO interference channel. Second, we use this conversion to write an integer optimization problem for the minimum number of communication blocks needed to deliver a set of receiver demands for a given caching realization. Third, we show how we can focus on average demands instead of the worst-case demands to derive an outer optimization problem on the number of communication blocks optimized over the caching realizations. Finally, we present a lower bound on the value of the aforementioned outer optimization problem, which leads to the desired upper bound on the one-shot linear sum-DoF of the network.

\subsection{Conversion to a Virtual MISO Interference Channel}

Consider any caching realization ${\left(\{\mathcal{P}_i\}_{i=1}^{K_T},\{\mathcal{Q}_i\}_{i=1}^{K_R}\right)}$ and any demand vector $\mathbf{d}$. As discussed in Section \ref{sec:model}, in each communication blocks a subset of requested packets are selected to be sent to a corresponding subset of distinct receivers. Now, we can state the following lemma, which bounds the number of packets that can be scheduled together in a single communication block using a one-shot linear scheme.
\begin{lem}\label{lem:one_shot_ub}
Consider a single communication block where a set $\{\mathbf{w}_{n_l,f_l}\}_{l=1}^{L}$ of $L$ packets are scheduled to be transmitted together to $L$ distinct receivers. In order for each receiver to successfully decode its desired packet, the number of these concurrently-scheduled packets should be bounded by
\begin{align}\label{eq:one_shot_condition}
{L\leq \min_{l\in[L]} |\mathcal{T}_l|+|\mathcal{R}_{l}|,}
\end{align}
where for any $l\in[L]$, $\mathcal{T}_{l}$ and $\mathcal{R}_{l}$ denote the set of transmitters and receivers which have cached the packet $\mathbf{w}_{n_l,f_l}$, respectively.
\end{lem}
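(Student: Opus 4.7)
My plan is to translate the one-shot linear decoding constraints into a linear algebra statement on the beamforming vector associated with each scheduled packet, and then do a dimension count on a ``virtual MISO'' subchannel. Write $j_l$ for the intended receiver of the $l$-th scheduled packet $\mathbf{w}_{n_l,f_l}$. Plugging the transmit rule \eqref{eq:beamforming} into the channel gives, at any scheduled receiver $\text{Rx}_{j_l}$, an expression of the form
\begin{align*}
\mathbf{y}_{j_l}[m] \;=\; \sum_{l'=1}^{L} \alpha_{l,l'}\,\tilde{\mathbf{w}}_{n_{l'},f_{l'}} + \mathbf{z}_{j_l}[m], \qquad
\alpha_{l,l'} \;\triangleq\; \sum_{i\in\mathcal{T}_{l'}} h_{j_l,i}\,v_{i,n_{l'},f_{l'}}[m].
\end{align*}
Since the allowed receiver operation \eqref{eq:decoding1} is a linear combination of $\mathbf{y}_{j_l}[m]$ and the coded packets in $\tilde{\mathcal{Q}}_{j_l}$, the decoding requirement \eqref{eq:decoding} is equivalent to (i) $\alpha_{l,l}\neq 0$, and (ii) for every $l'\neq l$, either $\alpha_{l,l'}=0$ (zero-forced in the air) or $j_l\in\mathcal{R}_{l'}$ (so the packet is available at $\text{Rx}_{j_l}$'s cache and can be subtracted out).

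Now I fix an index $l^*\in[L]$ and examine what these conditions say about the ``column'' of coefficients $\{v_{i,n_{l^*},f_{l^*}}[m]\}_{i\in\mathcal{T}_{l^*}}$; only transmitters in $\mathcal{T}_{l^*}$ contribute, because only they have $\mathbf{w}_{n_{l^*},f_{l^*}}$ in their cache. Let $Z_{l^*}\triangleq\{\,l\in[L]\setminus\{l^*\}:j_l\notin\mathcal{R}_{l^*}\,\}$ be the set of scheduled receivers that cannot cancel this packet from their cache. Since the $L$ receivers $\{j_l\}$ are distinct, at most $|\mathcal{R}_{l^*}|$ of them lie in $\mathcal{R}_{l^*}$, which gives the counting bound $|Z_{l^*}|\geq (L-1)-|\mathcal{R}_{l^*}|$. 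Writing $\mathbf{v}\triangleq(v_{i,n_{l^*},f_{l^*}}[m])_{i\in\mathcal{T}_{l^*}}\in\mathbb{C}^{|\mathcal{T}_{l^*}|}$ and $\mathbf{h}_j\triangleq(h_{j,i})_{i\in\mathcal{T}_{l^*}}$, the decoding conditions for packet $l^*$ become the linear system $\mathbf{v}\cdot\mathbf{h}_{j_l}=0$ for all $l\in Z_{l^*}$ together with $\mathbf{v}\cdot\mathbf{h}_{j_{l^*}}\neq 0$.

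The final step is a generic-position argument: under the standing assumption that the channel matrix $(h_{j,i})$ is in general position, any collection of at most $|\mathcal{T}_{l^*}|$ of the row vectors $\mathbf{h}_{j}$ (restricted to columns in $\mathcal{T}_{l^*}$) is linearly independent. Consequently, if $|Z_{l^*}|\geq |\mathcal{T}_{l^*}|$, the only $\mathbf{v}$ satisfying the zero-forcing equations is $\mathbf{v}=\mathbf{0}$, which contradicts $\mathbf{v}\cdot\mathbf{h}_{j_{l^*}}\neq 0$. Hence $|Z_{l^*}|\leq |\mathcal{T}_{l^*}|-1$, which combined with the lower bound on $|Z_{l^*}|$ yields
\begin{align*}
(L-1)-|\mathcal{R}_{l^*}| \;\leq\; |\mathcal{T}_{l^*}|-1 \qquad\Longrightarrow\qquad L \;\leq\; |\mathcal{T}_{l^*}|+|\mathcal{R}_{l^*}|.
\end{align*}
Minimizing over $l^*\in[L]$ gives \eqref{eq:one_shot_condition}.

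The main obstacle I expect is the last step: turning the informal ``dimension count works generically'' into a rigorous statement. I plan to handle this by arguing that the requirement that the $L\times L$ effective channel matrix $[\alpha_{l,l'}]$, restricted to its non-cache-cancellable entries, be simultaneously zero off-diagonal and nonzero on the diagonal is a polynomial condition in the channel coefficients; violation of the DoF bound forces this polynomial to vanish on a set of positive measure, which contradicts general position of the channel. The rest of the argument is clean bookkeeping: counting how many scheduled receivers can sit inside $\mathcal{R}_{l^*}$, and matching beamforming degrees of freedom $|\mathcal{T}_{l^*}|$ against zero-forcing constraints.
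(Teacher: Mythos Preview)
Your proposal is correct and follows essentially the same approach as the paper: reduce to a virtual MISO interference channel by collecting, for each scheduled packet $l$, the beamforming coefficients $\{v_{i,n_l,f_l}\}_{i\in\mathcal{T}_l}$ into a single vector, then do a dimension count of zero-forcing constraints against the $|\mathcal{T}_l|$ available degrees of freedom. The paper phrases the count slightly differently---it writes $\mathbf{v}_l=q_l\mathbf{P}_l[1;\bar{\mathbf{v}}_l]$ to isolate $|\mathcal{T}_l|-1$ free variables and then asserts the linear system is overdetermined when $L-|\mathcal{R}_l|-1>|\mathcal{T}_l|-1$---whereas you argue directly that generic independence of the restricted channel rows forces $\mathbf{v}=\mathbf{0}$ once $|Z_{l^*}|\geq|\mathcal{T}_{l^*}|$; these are the same argument, and your version is arguably more explicit about where the genericity assumption enters.
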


\begin{proof}
For ease of notation and without loss of generality, suppose that in the considered block, $L$ packets $\{\mathbf{w}_{1,1},...,\mathbf{w}_{L,1}\}$ are scheduled to be sent to $L$ receivers $\{\text{Rx}_1,...,\text{Rx}_L\}$, respectively. Each transmitter $\text{Tx}_i, i\in[K_T]$ will transmit
\begin{align}
\mathbf{x}_i=\sum_{{l: \: i\in\mathcal{T}_l}} v_{i,l,1} ~ \tilde{\mathbf{w}}_{l,1},
\end{align}
where we have dropped the dependency on the block index, since we are focusing on a single block.
On the other hand, the received signal of receiver $\text{Rx}_j, j\in[L]$ can be written as
\begin{align}
\mathbf{y}_j&=\sum_{i=1}^{K_T} h_{ji} \mathbf{x}_i\\
&=\sum_{i=1}^{K_T} h_{ji} \sum_{{l: \: i\in\mathcal{T}_l}} v_{i,l,1} ~ \tilde{\mathbf{w}}_{l,1}\\
&=\sum_{l=1}^L \sum_{i\in\mathcal{T}_l} h_{ji}~ v_{i,l,1} ~ \tilde{\mathbf{w}}_{l,1}.\label{eq:rx_signals_cache_rx}
\end{align}


Therefore, (\ref{eq:rx_signals_cache_rx}) implies that we can effectively convert the network into a new MISO interference channel with $L$ \emph{virtual} transmitters $\{\widehat{\text{Tx}}_l\}_{l=1}^L$, where $\widehat{\text{Tx}}_l$ is equipped with $|\mathcal{T}_{l}|$ antennas, and $L$ single-antenna receivers $\{\text{Rx}_j\}_{j=1}^{L}$, in which each virtual transmitter $\widehat{\text{Tx}}_l$ intends to send the coded packet $\tilde{\mathbf{w}}_{l,1}$ to receiver $\text{Rx}_l$. Each antenna in the new network corresponds to a transmitter in the original network. Hence, the channel vectors are correlated in the new network. In fact, as (\ref{eq:rx_signals_cache_rx}) suggests, all the antennas corresponding to the same transmitter in the original network have the same channel gain vectors to the receivers in the new network.




In the constructed MISO interference channel, we take a similar approach as in \cite{razaviyayn2012degrees} in order to bound the one-shot linear sum-DoF of the network. Each virtual transmitter $\widehat{\text{Tx}}_l$ in the constructed MISO network will select a beamforming vector $\mathbf{v}_l\in\mathbb{C}^{|\mathcal{T}_{l}| \times 1}$ (which consists of the coefficients chosen by the original transmitters corresponding to its antennas) to transmit its desired symbol. Denoting the channel gain vector between transmitter $\widehat{\text{Tx}}_l$ and receiver $\text{Rx}_j$ as $\mathbf{h}_{jl}\in\mathbb{C}^{|\mathcal{T}_{l}|\times 1}$, the decodability conditions can be written as
\begin{gather}
\mathbf{h}_{jl}^T \mathbf{v}_l=0, ~\forall l\neq j \text{ s.t. } j\notin \mathcal{R}_{l} \label{eq:algn_cache_rx}\\
\mathbf{h}_{jj}^T \mathbf{v}_j\neq 0, ~\forall  j\in [L].
\end{gather}

Now, each of the vectors $\mathbf{v}_l,~l\in[L]$ can be written as
\begin{align}
\mathbf{v}_l=q_l \mathbf{P}_l
\begin{bmatrix}
1 \\ \bar{\mathbf{v}}_l 
\end{bmatrix},
\end{align}
where $q_l$ is a non-zero scalar, $\mathbf{P}_l$ is a $|\mathcal{T}_{l}| \times |\mathcal{T}_{l}|$ permutation matrix and $\bar{\mathbf{v}}_l$ is a vector of size $(|\mathcal{T}_{l}|-1)\times 1$. Also, for any two distinct pairs $l \neq j$, the channel gain vector $\mathbf{h}_{jl}$ can be permuted as $\bar{\mathbf{h}}_{jl}=\mathbf{P}_l^{-1} \mathbf{h}_{jl}$, and we can partition $\bar{\mathbf{h}}_{jl}$ as
\begin{align}
\bar{\mathbf{h}}_{jl}=
\begin{bmatrix}
\bar{{h}}_{jl}^{(1)} \\ \bar{\mathbf{h}}_{jl}^{(2)}
\end{bmatrix},
\end{align}
where $\bar{{h}}_{jl}^{(1)}$ is a scalar and $\bar{\mathbf{h}}_{jl}^{(2)}$ is of size $(|\mathcal{T}_{l}|-1)\times 1$. Therefore, the nulling condition in (\ref{eq:algn_cache_rx}) can be rewritten as
\begin{align}\label{eq:algn2_cache_rx}
\begin{bmatrix}
\bar{{h}}_{jl}^{(1)} \bar{\mathbf{h}}_{jl}^{(2)}
\end{bmatrix}
\begin{bmatrix}
1 \\ \bar{\mathbf{v}}_{l} 
\end{bmatrix}=0 \Leftrightarrow \bar{{h}}_{jl}^{(1)}+\bar{\mathbf{h}}_{jl}^{(2)T}\bar{\mathbf{v}}_{l}=0.
\end{align}

Now, since the packet sent by the virtual transmitter $\widehat{\text{Tx}}_l$ is available in the caches of at most $|\mathcal{R}_{l}|$ receivers in the network, the interference of each transmitter should be nulled at least at $L-|\mathcal{R}_{l}|-1$ unintended receivers. This implies that the free beamforming variables at transmitter $l$, i.e., $\bar{\mathbf{v}}_l$, should satisfy at least $L-|\mathcal{R}_{l}|-1$ linear equations in the form of (\ref{eq:algn2_cache_rx}). This is not possible unless the number of equations is no greater than the number of variables, or
\begin{align}
L-|\mathcal{R}_{l}|-1 \leq |\mathcal{T}_{l}|-1 \Rightarrow L\leq |\mathcal{T}_{l}|+|\mathcal{R}_{l}|.
\end{align}

Since the above inequality holds for all $l\in[L]$, the proof is complete.
\end{proof}

\subsection{Integer Program Formulation}

Equipped with Lemma \ref{lem:one_shot_ub}, we define a set of packets $\mathcal{D}_m$ selected to be transmitted at block $m$ to be \emph{feasible} if its size satisfies condition \eqref{eq:one_shot_condition} in Lemma \ref{lem:one_shot_ub}. We can then write the following integer program (P1) to minimize the number of required communication blocks for any given caching realization and set of receiver demands:
\begin{align}
\min ~~~~&H \tag{P1-1}\\
\text{s.t.}~~~~& \bigcup_{m=1}^{H}\mathcal{D}_m = \bigcup_{j=1}^{K_R} \left( W_{d_j} \setminus \mathcal{Q}_j\right)\tag{P1-2}\label{eq:cover_all_packets}\\
&\mathcal{D}_m \text{ is feasible},\ \forall m \in [H],\tag{P1-3}
\end{align}
where (\ref{eq:cover_all_packets}) states that all the demanded packets that are not cached at the requesting receivers need to be delivered by the transmitters over the $H$ blocks of communication.

\subsection{Relaxing Worst-Case Demands to Average Demands and Optimizing over Caching Realizations}

We can now write an optimization problem to minimize the number of communication blocks required for delivering the worst-case demands optimized over the caching realizations. However, before that, we need to introduce some notation.

Given any caching realization ${\left(\{\mathcal{P}_i\}_{i=1}^{K_T},\{\mathcal{Q}_i\}_{i=1}^{K_R}\right)}$, we can break each file $W_n, n\in[N]$, in the library into $(2^{K_T}-1)(2^{K_R})$ subfiles $\{W_{n,\mathcal{T},\mathcal{R}}\}_{\mathcal{T}\subseteq_{\emptyset} [K_T],\mathcal{R}\subseteq [K_R]}$, where $W_{n,\mathcal{T},\mathcal{R}}$ denotes the subfile of $W_n$ exclusively stored in the caches of the transmitters in $\mathcal{T}$ and receivers in $\mathcal{R}$, and we use the shorthand notation $\mathcal{T}\subseteq_{\emptyset} [K_T]$ to denote $\mathcal{T}\subseteq [K_T],\mathcal{T}\neq \emptyset$. We define $a_{n,\mathcal{T},\mathcal{R}}$ as the number of packets in $W_{n,\mathcal{T},\mathcal{R}}$.




Denoting the answer to the optimization problem (P1) by ${H^*\left(\{\mathcal{P}_i\}_{i=1}^{K_T},\{\mathcal{Q}_i\}_{i=1}^{K_R}, \mathbf{d}\right)}$, the below optimization problem yields the number of communication blocks required for delivering the worst-case demands, minimized over all caching realizations:
\begin{align}
\min_{{\{\mathcal{P}_i\}_{i=1}^{K_T},\{\mathcal{Q}_i\}_{i=1}^{K_R}}} ~~~~&\max_{\mathbf{d}} ~~~~{H^*\left(\{\mathcal{P}_i\}_{i=1}^{K_T},\{\mathcal{Q}_i\}_{i=1}^{K_R}, \mathbf{d}\right)}\tag{P2-1}\label{eq:OPT_outer_start_rx_cache}\\
\text{s.t.}\qquad\qquad
&{\sum\limits_{\mathcal{T}\subseteq_{\emptyset}[K_T]} \sum\limits_{\mathcal{R}\subseteq[K_R]} a_{n,\mathcal{T},\mathcal{R}}=F,~\forall n\in [N]}\tag{P2-2}\\
&{\sum\limits_{n=1}^N \sum\limits_{\mathcal{R}\subseteq[K_R]} \sum\limits_{\substack{\mathcal{T}\subseteq [K_T]:\\i \in \mathcal{T}}} a_{n,\mathcal{T},\mathcal{R}}\leq M_T F, \forall i\in [K_T]}\tag{P2-3}\\
&\sum\limits_{n=1}^N \sum\limits_{\mathcal{T}\subseteq_{\emptyset}[K_T]} \sum\limits_{\substack{\mathcal{R}\subseteq[K_R]:\\j \in \mathcal{R}}} a_{n,\mathcal{T},\mathcal{R}}\leq M_R F, \forall j\in [K_R]\tag{P2-4}\\
&a_{n,\mathcal{T},\mathcal{R}}\geq 0, \forall n\in [N], \forall \mathcal{T}\subseteq_{\emptyset}[K_T],\forall \mathcal{R}\subseteq[K_R].\tag{P2-5}\label{eq:OPT_outer_end_rx_cache}
\end{align}

To lower bound the value of the above optimization problem, we can write the following optimization problem, which yields the number of communication blocks averaged over all the ${\pi(N,K_R)=\frac{N!}{(N-K_R)!}}$ permutations of distinct receiver demands, denoted by $\mathcal{P}_{N,K_R}$:
\begin{align}
\min_{{\{\mathcal{P}_i\}_{i=1}^{K_T},\{\mathcal{Q}_i\}_{i=1}^{K_R}}} ~~~~& \frac{1}{\pi(N,K_R)} {\sum_{\mathbf{d}\in\mathcal{P}_{N,K_R}} H^*\left(\{\mathcal{P}_i\}_{i=1}^{K_T},\{\mathcal{Q}_i\}_{i=1}^{K_R}, \mathbf{d}\right)}\tag{P3-1}\label{eq:OPT_outer_start_rx_cache_sum}\\
\text{s.t.}\qquad\qquad
&{\sum\limits_{\mathcal{T}\subseteq_{\emptyset}[K_T]} \sum\limits_{\mathcal{R}\subseteq[K_R]} a_{n,\mathcal{T},\mathcal{R}}=F,~\forall n\in [N]}\tag{P3-2}\label{eq:OPT_outer_start_rx_cache_sum_F_constraint}\\
& {\sum\limits_{n=1}^N \sum\limits_{\mathcal{R}\subseteq[K_R]} \sum\limits_{\substack{\mathcal{T}\subseteq [K_T]:\\i \in \mathcal{T}}} a_{n,\mathcal{T},\mathcal{R}}\leq M_T F, \forall i\in [K_T]}\tag{P3-3}\label{eq:OPT_cache_size_constraint_Tx}\\
& \sum\limits_{n=1}^N \sum\limits_{\mathcal{T}\subseteq_{\emptyset}[K_T]} \sum\limits_{\substack{\mathcal{R}\subseteq[K_R]:\\j \in \mathcal{R}}} a_{n,\mathcal{T},\mathcal{R}}\leq M_R F, \forall j\in [K_R]\tag{P3-4}\label{eq:OPT_cache_size_constraint_Rx}\\
& {a_{n,\mathcal{T},\mathcal{R}}\geq 0, \forall n\in [N], \forall \mathcal{T}\subseteq_{\emptyset}[K_T],\forall \mathcal{R}\subseteq[K_R].}\tag{P3-5}\label{eq:OPT_outer_end_rx_cache_sum}
\end{align}

\subsection{Lower Bound on the Number of Communication Blocks}

Having the optimization problem in (P3), we now present the following lemma which provides a lower bound on the value of (P3).

\begin{lem}\label{lem:OPT_bound_rx}
The value of the optimization problem (P3) is bounded from below by $\frac{K_R N F \left(1-\frac{M_R}{N}\right)^2}{K_T M_T+K_R M_R }$.
\end{lem}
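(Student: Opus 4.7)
The plan is to lower bound the objective of (P3) in four steps that mirror the structure of the optimization problem, ultimately reducing matters to a single Cauchy--Schwarz application.

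First, I would produce a per-demand lower bound on $H^*$. By Lemma~\ref{lem:one_shot_ub}, any feasible block $\mathcal{D}_m$ satisfies $|\mathcal{D}_m|\leq |\mathcal{T}_{\ell}|+|\mathcal{R}_\ell|$ for every packet $\ell \in \mathcal{D}_m$. Hence $\sum_{\ell \in \mathcal{D}_m} \tfrac{1}{|\mathcal{T}_\ell|+|\mathcal{R}_\ell|} \leq 1$, and summing over blocks gives
\begin{align*}
H^*\!\left(\{\mathcal{P}_i\},\{\mathcal{Q}_i\},\mathbf{d}\right) \;\geq\; \sum_{j=1}^{K_R}\sum_{\substack{\mathcal{T}\subseteq_\emptyset [K_T]\\ \mathcal{R}\subseteq [K_R]:\, j\notin \mathcal{R}}} \frac{a_{d_j,\mathcal{T},\mathcal{R}}}{|\mathcal{T}|+|\mathcal{R}|},
\end{align*}
since every packet of $W_{d_j}$ not already cached at $\text{Rx}_j$ must appear in at least one $\mathcal{D}_m$ by the feasibility condition of (P1).

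Next, I would average this bound over all permutations $\mathbf{d}\in\mathcal{P}_{N,K_R}$. For each fixed $j$, as $\mathbf{d}$ ranges over $\mathcal{P}_{N,K_R}$ the coordinate $d_j$ is uniform on $[N]$, so the sum over $n$ of the $a_{n,\mathcal{T},\mathcal{R}}$'s appears divided by $N$. Introducing $A_{\mathcal{T},\mathcal{R}}:=\sum_n a_{n,\mathcal{T},\mathcal{R}}$ and aggregating by sizes $B_{t,r}:=\sum_{|\mathcal{T}|=t,|\mathcal{R}|=r}A_{\mathcal{T},\mathcal{R}}$, this yields
\begin{align*}
\frac{1}{\pi(N,K_R)}\sum_{\mathbf{d}}H^*(\cdot,\mathbf{d}) \;\geq\; \sum_{t,r}\frac{B_{t,r}\,(K_R-r)}{N(t+r)}.
\end{align*}
Meanwhile, constraint \eqref{eq:OPT_outer_start_rx_cache_sum_F_constraint} gives $\sum_{t,r}B_{t,r}=NF$, and summing \eqref{eq:OPT_cache_size_constraint_Tx} over $i$ and \eqref{eq:OPT_cache_size_constraint_Rx} over $j$ gives $\sum_{t,r}t\,B_{t,r}\leq K_TM_TF$ and $\sum_{t,r}r\,B_{t,r}\leq K_RM_RF$.

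The main step, and the only nontrivial one, is to lower bound $\sum_{t,r}\tfrac{B_{t,r}(K_R-r)}{N(t+r)}$ under these linear constraints. I would apply Cauchy--Schwarz with the splitting $B_{t,r}(K_R-r)=\sqrt{\tfrac{B_{t,r}(K_R-r)}{t+r}}\cdot\sqrt{B_{t,r}(K_R-r)(t+r)}$, giving
\begin{align*}
\sum_{t,r}\frac{B_{t,r}(K_R-r)}{t+r} \;\geq\; \frac{\bigl(\sum_{t,r}B_{t,r}(K_R-r)\bigr)^2}{\sum_{t,r}B_{t,r}(K_R-r)(t+r)}.
\end{align*}
The numerator is at least $\bigl(K_R\cdot NF-K_RM_RF\bigr)^2=K_R^2F^2(N-M_R)^2$ using $\sum rB_{t,r}\leq K_RM_RF$; the denominator is at most $K_R(K_TM_T+K_RM_R)F$ using $K_R-r\leq K_R$ and the sum of the two cache constraints. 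Combining these and dividing by $N$ produces precisely $\tfrac{K_RNF(1-M_R/N)^2}{K_TM_T+K_RM_R}$.

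The only place that requires real care is the Cauchy--Schwarz factoring: a naive application (e.g., separating $B_{t,r}$ from $(K_R-r)/(t+r)$) loses the factor $(1-M_R/N)^2$. The correct split, putting $(K_R-r)$ into \emph{both} Cauchy--Schwarz factors, is what converts the linear bound $\sum rB_{t,r}\leq K_RM_RF$ into a quadratic gain and yields the claimed exponent on $(1-M_R/N)$.
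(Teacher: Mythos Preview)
Your proposal is correct and follows essentially the same route as the paper's proof: a per-packet charging argument from Lemma~\ref{lem:one_shot_ub}, averaging over distinct demands, aggregation by $(|\mathcal{T}|,|\mathcal{R}|)$, and a Cauchy--Schwarz step against the summed cache constraints \eqref{eq:OPT_cache_size_constraint_Tx}--\eqref{eq:OPT_cache_size_constraint_Rx}. Your single Cauchy--Schwarz split $B_{t,r}(K_R-r)=\sqrt{B_{t,r}(K_R-r)/(t+r)}\cdot\sqrt{B_{t,r}(K_R-r)(t+r)}$ is in fact slightly cleaner than the paper's version, which first loosens by a factor $(K_R-r')/K_R$ and then applies Cauchy--Schwarz twice (once in $r'$ and once in $r$), but both arrive at the identical bound.
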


\begin{proof}
See Appendix \ref{apx:proof_lem_OPT}.
\end{proof}

Since the total number of packets delivered over the channel is $K_R \left(1-\frac{M_R}{N}\right) F$ in the optimization problem (P3), Lemma \ref{lem:OPT_bound_rx} immediately yields the following upper bound on the one-shot linear sum-DoF:
\begin{align*}
{\DoF_{\text{L,sum}}^*(N,M_T,M_R)\leq \frac{K_R \left(1-\frac{M_R}{N}\right) F}{\frac{K_R N F \left(1-\frac{M_R}{N}\right)^2}{K_T M_T+K_R M_R }}=\frac{K_T M_T+K_R M_R}{N-M_R}.}
\end{align*}

Combining the above bound with the trivial bound on the one-shot linear sum-DoF which is the number of receivers, $K_R$, we have
\begin{align}\label{eq:final_bound}
{\DoF_{\text{L,sum}}^*(N,M_T,M_R)\leq \min\left\{\frac{K_T M_T+K_R M_R}{N-M_R},K_R\right\}.}
\end{align}


Now, consider the following two cases:
\begin{itemize}
\item $M_R\leq \frac{N}{2}$: In this case, (\ref{eq:final_bound}) implies that
\begin{align*}
{\DoF_{\text{L,sum}}^*(N,M_T,M_R)}&{\leq \min\left\{\frac{K_T M_T+K_R M_R}{N-\frac{N}{2}},K_R\right\}}\\
&{\leq \min\left\{2\frac{K_T M_T+K_R M_R}{N},K_R\right\}.}
\end{align*}

\item $M_R > \frac{N}{2}$: In this case, (\ref{eq:ach_dof}) implies that one-shot linear sum-DoF of
\begin{align*}
{\DoF_{\text{L,sum}} (N,M_T,M_R)> \min\left\{\frac{K_T M_T+K_R \frac{N}{2}}{N},K_R\right\}> \frac{K_R}{2},}
\end{align*}
can be achieved, while the upper bound in (\ref{eq:final_bound}) implies that ${\DoF_{\text{L,sum}}^*(N,M_T,M_R)\leq K_R}$.
\end{itemize}

Therefore, in both cases, the inner bound in (\ref{eq:ach_dof}) is within a factor of 2 of the outer bound in (\ref{eq:ach_dof}), which completes the proof of the converse of Theorem \ref{thm:main}.

\section{Concluding Remarks and Future Directions}\label{sec:conc}
In this work, we considered a wireless network setting with arbitrary numbers of transmitters and receivers, where all transmitters and receivers in the network are equipped with cache memories of specific sizes. We characterized the one-shot linear sum-DoF of the network to within a gap of 2. In particular, we showed that the one-shot linear sum-DoF of the network is proportional to the aggregate cache size in the network, even though the cache of each node is isolated from all the other nodes. We presented an achievable scheme which loads the caches carefully in order to maximize the opportunity for zero-forcing the outgoing interference from the transmitters and interference cancellation due to previously-cached content at the receivers. We also demonstrated that the achievable one-shot linear sum-DoF of our scheme is within a multiplicative factor of 2 of the optimal one-shot linear sum-DoF by bounding the number of communication blocks required to deliver any set of requested files to the receivers using an integer programming approach.

There are several interesting directions following this work. First, in this work we assumed all the links in the network to be present in the network topology. However, due to fading effects, some links between certain transmitter-receiver pairs might be absent from the network topology. It would be interesting to study what type of caching strategies are optimal in this case and to explore its connections to the index coding problem~\cite{birk,baryossef,rouayheb}. Another direction would be to combine caching with more sophisticated interference management schemes. Some initial results have been reported in \cite{cached_itlinq}, in which the authors used the replication in the cache contents at the transmitters in order to improve the system performance using the ITLinQ scheme \cite{itlinq,itlinq_dyspan,itlinq_isit}. It would be interesting to study the role of transmitter and receiver caches illustrated in this work in improving the achievable system throughput that more sophisticated delivery schemes such as ITLinQ can provide.

\appendices

\section{Proof of Lemma \ref{lem:partition_packets}}\label{apx:proof_lem_partition}

For any $\mathcal{T}\subseteq[K_T]$ s.t. $|\mathcal{T}|=t_T$ and for any $l\in[t_T+t_R]$, it is clear that the set $\mathcal{T}\oplus_{K_T}(l-1)$ is of size $t_T$. Also, for any $\mathcal{R}\subseteq[K_R]$ s.t. $|\mathcal{R}|=t_T+t_R$, and for any permutation $\pi \in \Pi_{\mathcal{R}}^{\text{circ}}$, the vector $\pi[l+1:l+t_R]$ is of size $t_R$ and the vector $\pi[l+t_R+1:l+t_R+t_T-1]$ is of size $t_T-1$.

Furthermore, note that $W_{d_{\pi(l)},\mathcal{T}\oplus_{K_T}(l-1),\pi[l+1:l+t_R],\pi[l+t_R+1:l+t_R+t_T-1]}$ is a subfile of the file $W_{d_{\pi(l)}}$ requested by receiver $\text{Rx}_{\pi(l)}$. However, since $\pi(l)\notin \pi[l+1:l+t_R]$, receiver $\text{Rx}_{\pi(l)}$ has not stored the packets in this subfile in its cache and therefore, this subfile needs to be delivered to this receiver.

Finally, each set inside the union in \eqref{eq:subfile_perm} is composed of $t_T+t_R$ subfiles. The number of such sets is equal to
\begin{align}
\binom{K_T}{t_T}\binom{K_R}{t_T+t_R} (t_T+t_R-1)!.
\end{align}
Hence, the total number of subfiles in \eqref{eq:subfile_perm} is equal to 
\begin{align}\label{eq:total_packets}
\binom{K_T}{t_T}\binom{K_R}{t_T+t_R} (t_T+t_R-1)!(t_T+t_R)=\binom{K_T}{t_T}\binom{K_R}{t_T+t_R} (t_T+t_R)!.
\end{align}

On the other hand, each receiver $\text{Rx}_j$ has already cached $\binom{K_T}{t_T}\binom{K_R-1}{t_R-1}$ subfiles as in \eqref{eq:Q_j} in its cache, and needs the rest of the subfiles of its requested file, i.e., $\binom{K_T}{t_T}\binom{K_R-1}{t_R}$ subfiles, where each subfile is further partitioned into $ \frac{t_R![K_R-(t_R+1)]!}{[K_R-(t_R+t_T)]!}$ smaller subfiles. Hence, the total number of small subfiles that need to be delivered to all the receivers is equal to
\begin{align}
K_R \left[ \binom{K_T}{t_T}\binom{K_R-1}{t_R} \right] \left[ \frac{t_R![K_R-(t_R+1)]!}{[K_R-(t_R+t_T)]!}\right]=\binom{K_T}{t_T}\binom{K_R}{t_T+t_R} (t_T+t_R)!,
\end{align}
which equals the total number of small subfiles in \eqref{eq:subfile_perm}, calculated in \eqref{eq:total_packets}.
Consequently, the set of requested subfiles which are not cached at the corresponding receivers can be partitioned as in \eqref{eq:subfile_perm}, hence the proof is complete. \hfill\qed

\section{Proof of Lemma \ref{lem:OPT_bound_rx}}\label{apx:proof_lem_OPT}

According to the constraint \eqref{eq:one_shot_condition}, each of the packets of \emph{order} $s$, which are available at $s$ nodes, either on the transmitter side or the receiver side, can be scheduled with at most $s-1$ packets of the same order. Therefore, for any given caching realization and set of demands, we have the lower bound
\begin{align*}
H^*\bigg(\{\mathcal{P}_i\}_{i=1}^{K_T},&\{\mathcal{Q}_i\}_{i=1}^{K_R}, \mathbf{d}\bigg)\\
&\geq \sum_{s=K_R}^{K_T+K_R} \sum_{j=1}^{K_R}  \sum_{\substack{\mathcal{T}\subseteq[K_T]:\\|\mathcal{T}|\in[s]}} \sum_{\substack{\mathcal{R}\subseteq[K_R]:\\|\mathcal{R}|=s-|S_T|\\j\notin\mathcal{R}}} \frac{a_{d_j,\mathcal{T},\mathcal{R}}}{K_R}+\sum_{s=1}^{K_R-1} \sum_{j=1}^{K_R} \sum_{\substack{\mathcal{T}\subseteq[K_T]:\\|\mathcal{T}|\in[s]}} \sum_{\substack{\mathcal{R}\subseteq[K_R]:\\|\mathcal{R}|=s-|S_T|\\j\notin\mathcal{R}}} \frac{a_{d_j,\mathcal{T},\mathcal{R}}}{s}\\
&\geq \sum_{s=1}^{K_T+K_R} \sum_{j=1}^{K_R} \sum_{\substack{\mathcal{T}\subseteq[K_T]:\\|\mathcal{T}|\in[s]}} \sum_{\substack{\mathcal{R}\subseteq[K_R]:\\|\mathcal{R}|=s-|S_T|\\j\notin\mathcal{R}}} \frac{a_{d_j,\mathcal{T},\mathcal{R}}}{s}.\numberthis
\end{align*}

Now, denoting the objective function in \eqref{eq:OPT_outer_start_rx_cache_sum} by $\bar{H}\left(\{\mathcal{P}_i\}_{i=1}^{K_T},\{\mathcal{Q}_i\}_{i=1}^{K_R}\right)$, we have
\begin{align*}
\bar{H}\bigg(\{\mathcal{P}_i\}_{i=1}^{K_T},\{\mathcal{Q}_i\}_{i=1}^{K_R}\bigg)
&\geq \frac{1}{\pi(N,K_R)} \sum_{s=1}^{K_T+K_R}  \frac{1}{s} \sum_{j=1}^{K_R} \sum_{\substack{\mathcal{T}\subseteq[K_T]:\\|\mathcal{T}|\in[s]}} \sum_{\substack{\mathcal{R}\subseteq[K_R]:\\|\mathcal{R}|=s-|S_T|\\j\notin\mathcal{R}}}  \pi(N-1,K_R-1) \sum_{n=1}^N a_{n,\mathcal{T},\mathcal{R}}\\
& = \frac{1}{N} ~ \sum_{s=1}^{K_T+K_R}  \frac{1}{s} \sum_{j=1}^{K_R} \sum_{\substack{\mathcal{T}\subseteq[K_T]:\\|\mathcal{T}|\in[s]}} \sum_{\substack{\mathcal{R}\subseteq[K_R]:\\|\mathcal{R}|=s-|S_T|\\j\notin\mathcal{R}}}   \sum_{n=1}^N a_{n,\mathcal{T},\mathcal{R}}\\
& = \frac{1}{N} ~ \sum_{r=1}^{K_T} \sum_{r'=0}^{K_R}  \frac{1}{r+r'} \sum_{j=1}^{K_R} \sum_{\substack{\mathcal{T}\subseteq[K_T]:\\|\mathcal{T}|=r}} \sum_{\substack{\mathcal{R}\subseteq[K_R]:\\|\mathcal{R}|=r'\\j\notin\mathcal{R}}}   \sum_{n=1}^N a_{n,\mathcal{T},\mathcal{R}}\\
& = \frac{1}{N} ~ \sum_{r=1}^{K_T} \sum_{r'=0}^{K_R}  \frac{K_R-r'}{r+r'} \sum_{\substack{\mathcal{T}\subseteq[K_T]:\\|\mathcal{T}|=r}} \sum_{\substack{\mathcal{R}\subseteq[K_R]:\\|\mathcal{R}|=r'}}   \sum_{n=1}^N a_{n,\mathcal{T},\mathcal{R}}\\
& = \frac{1}{N} ~ \sum_{r=1}^{K_T} \sum_{r'=0}^{K_R-1}  \frac{b_{r,r'}}{r+r'},\numberthis\label{eq:bound_avg_OPT}
\end{align*}
where for any $r\in[K_T]$ and $r'\in [K_R-1] \cup \{0\}$, we define
\begin{align}
b_{r,r'} \triangleq \sum_{j=1}^{K_R} \sum_{\substack{\mathcal{T}\subseteq[K_T]:\\|\mathcal{T}|=r}} \sum_{\substack{\mathcal{R}\subseteq[K_R]:\\|\mathcal{R}|=r'\\j\notin\mathcal{R}}}   \sum_{n=1}^N a_{n,\mathcal{T},\mathcal{R}} = (K_R-r') \sum_{\substack{\mathcal{T}\subseteq[K_T]:\\|\mathcal{T}|=r}} \sum_{\substack{\mathcal{R}\subseteq[K_R]:\\|\mathcal{R}|=r'}}   \sum_{n=1}^N a_{n,\mathcal{T},\mathcal{R}}.
\end{align}

Moreover, adding the constraint in (\ref{eq:OPT_cache_size_constraint_Tx}) over all transmitters yields
\begin{align}
K_T M_T F &\geq \sum_{i=1}^{K_T} \sum_{n=1}^N \sum_{\mathcal{R}\subseteq[K_R]} \sum_{\substack{\mathcal{T}\subseteq[K_T]:\\i \in \mathcal{T}}} a_{n,\mathcal{T},\mathcal{R}}\\
&=  \sum_{n=1}^N \sum_{\mathcal{R}\subseteq[K_R]} \sum_{i=1}^{K_T} \sum_{\substack{\mathcal{T}\subseteq[K_T]:\\i \in \mathcal{T}}} a_{n,\mathcal{T},\mathcal{R}}\\
&=  \sum_{n=1}^N \sum_{\mathcal{R}\subseteq[K_R]} \sum_{r=1}^{K_T} r \sum_{\substack{\mathcal{T}\subseteq[K_T]:\\|\mathcal{T}|=r}} a_{n,\mathcal{T},\mathcal{R}}.\label{eq:adding_tx_caches}
\end{align}

Likewise, adding the constraint in (\ref{eq:OPT_cache_size_constraint_Rx}) over all receivers yields
\begin{align}
K_R M_R F &\geq \sum_{j=1}^{K_R} \sum_{n=1}^N \sum_{\mathcal{T}\subseteq_{\emptyset}[K_T]} \sum_{\substack{\mathcal{R}\subseteq[K_R]:\\j \in \mathcal{R}}} a_{n,\mathcal{T},\mathcal{R}}\label{eq:adding_rx_caches1}\\
&=  \sum_{n=1}^N \sum_{\mathcal{T}\subseteq_{\emptyset}[K_T]} \sum_{j=1}^{K_R} \sum_{\substack{\mathcal{R}\subseteq[K_R]:\\j \in \mathcal{R}}} a_{n,\mathcal{T},\mathcal{R}}\\
&=  \sum_{n=1}^N \sum_{\mathcal{T}\subseteq_{\emptyset}[K_T]} \sum_{r'=0}^{K_R} r' \sum_{\substack{\mathcal{R}\subseteq[K_R]:\\|\mathcal{R}|=r'}} a_{n,\mathcal{T},\mathcal{R}},\label{eq:adding_rx_caches}
\end{align}
and from (\ref{eq:adding_tx_caches}) and (\ref{eq:adding_rx_caches}), we have
\begin{align}
(K_T M_T + K_R M_R) F&\geq \sum_{n=1}^N \left[ \sum_{\mathcal{R}\subseteq[K_R]} \sum_{r=1}^{K_T} r \sum_{\substack{\mathcal{T}\subseteq[K_T]:\\|\mathcal{T}|=r}} a_{n,\mathcal{T},\mathcal{R}} + \sum_{\mathcal{T}\subseteq[K_T]} \sum_{r'=0}^{K_R} r' \sum_{\substack{\mathcal{R}\subseteq[K_R]:\\|\mathcal{R}|=r'}} a_{n,\mathcal{T},\mathcal{R}} \right]\\
&=\sum_{r=1}^{K_T} \sum_{r'=0}^{K_R} (r+r') \sum_{\substack{\mathcal{T}\subseteq[K_T]:\\|\mathcal{T}|=r}} \sum_{\substack{\mathcal{R}\subseteq[K_R]:\\|\mathcal{R}|=r'}}   \sum_{n=1}^N a_{n,\mathcal{T},\mathcal{R}}\\
&\geq \sum_{r=1}^{K_T} \sum_{r'=0}^{K_R-1} \frac{r+r'}{K_R-r'} b_{r,r'}.\label{eq:ineq_sbs_rx_cache}
\end{align}

Now, using the Cauchy-Schwarz inequality, we can write
\begin{align}
\sum_{r'=0}^{K_R-1} b_{r,r'}
&\leq  \sqrt{\sum_{r'=0}^{K_R-1} \frac{r+r'}{K_R-r'} b_{r,r'}}
\sqrt{\sum_{r'=0}^{K_R-1} \frac{K_R-r'}{r+r'} b_{r,r'}}.
\end{align}

Summing the above inequality over $r$ yields
\begin{align}
\sum_{r=1}^{K_T} \sum_{r'=0}^{K_R-1} b_{r,r'}
&\leq \sum_{r=1}^{K_T} \left[ \sqrt{\sum_{r'=0}^{K_R-1} \frac{r+r'}{K_R-r'} b_{r,r'}}
\sqrt{\sum_{r'=0}^{K_R-1} \frac{K_R-r'}{r+r'} b_{r,r'}}\right]\\
&\leq   \sqrt{\sum_{r=1}^{K_T} \sum_{r'=0}^{K_R-1} \frac{r+r'}{K_R-r'} b_{r,r'}}
\sqrt{\sum_{r=1}^{K_T} \sum_{r'=0}^{K_R-1} \frac{K_R-r'}{r+r'} b_{r,r'}}\label{eq:2nd_cauchy}\\
&\leq \sqrt{(K_T M_T + K_R M_R) F}
\sqrt{\sum_{r=1}^{K_T} \sum_{r'=0}^{K_R-1} \frac{K_R-r'}{r+r'} b_{r,r'}},\label{eq:total_cache_plug_in}
\end{align}
where in \eqref{eq:2nd_cauchy} we have invoked the Cauchy-Schwarz inequality again and \eqref{eq:total_cache_plug_in} follows from (\ref{eq:ineq_sbs_rx_cache}). On the other hand, we have
\begin{align}
\sum_{r=1}^{K_T} \sum_{r'=0}^{K_R-1} b_{r,r'}
&=\sum_{r=1}^{K_T} \sum_{r'=0}^{K_R-1} \sum_{j=1}^{K_R} \sum_{\substack{\mathcal{T}\subseteq[K_T]:\\|\mathcal{T}|=r}} \sum_{\substack{\mathcal{R}\subseteq[K_R]:\\|\mathcal{R}|=r'\\j\notin\mathcal{R}}}   \sum_{n=1}^N a_{n,\mathcal{T},\mathcal{R}}\\
&=\sum_{r=1}^{K_T} \sum_{r'=0}^{K_R} \sum_{j=1}^{K_R} \left[\left(\sum_{\substack{\mathcal{T}\subseteq[K_T]:\\|\mathcal{T}|=r}} \sum_{\substack{\mathcal{R}\subseteq[K_R]:\\|\mathcal{R}|=r'}}   \sum_{n=1}^N a_{n,\mathcal{T},\mathcal{R}}\right)-\left(\sum_{\substack{\mathcal{T}\subseteq[K_T]:\\|\mathcal{T}|=r}} \sum_{\substack{\mathcal{R}\subseteq[K_R]:\\|\mathcal{R}|=r'\\j\in\mathcal{R}}}   \sum_{n=1}^N a_{n,\mathcal{T},\mathcal{R}}\right)\right]\\
&= K_R \left(\sum_{n=1}^N \sum_{\substack{\mathcal{T}\subseteq_{\emptyset}[K_T]}} \sum_{\substack{\mathcal{R}\subseteq[K_R]}}    a_{n,\mathcal{T},\mathcal{R}}\right)- \sum_{j=1}^{K_R} \sum_{n=1}^N \sum_{\substack{\mathcal{T}\subseteq_{\emptyset}[K_T]}} \sum_{\substack{\mathcal{R}\subseteq[K_R]:\\j\in\mathcal{R}}}    a_{n,\mathcal{T},\mathcal{R}}\\
&\geq K_R(N-M_R)F,\label{eq:whole_uncached_lib}
\end{align}
where the inequality is due to \eqref{eq:OPT_outer_start_rx_cache_sum_F_constraint} and \eqref{eq:adding_rx_caches1}. Therefore, we can continue \eqref{eq:bound_avg_OPT} to bound the objective function in \eqref{eq:OPT_outer_start_rx_cache_sum}  as
\begin{align}
\bar{H}\left(\{\mathcal{P}_i\}_{i=1}^{K_T},\{\mathcal{Q}_i\}_{i=1}^{K_R}\right)   &\geq \frac{1}{N} ~ \sum_{r=1}^{K_T} \sum_{r'=0}^{K_R-1}  \frac{b_{r,r'}}{r+r'}\\
&\geq \frac{1}{K_R N} ~ \sum_{r=1}^{K_T} \sum_{r'=0}^{K_R-1}  \frac{(K_R-r')b_{r,r'}}{r+r'}\\
& \geq \frac{1}{ K_R N F(K_T M_T+K_R M_R) } \left(\sum_{r=1}^{K_T} \sum_{r'=0}^{K_R-1} b_{r,r'}\right)^2\label{eq:final_step_cauchy}\\
&\geq \frac{1}{ K_R N F(K_T M_T+K_R M_R)} \bigg(K_R(N-M_R)F\bigg)^2\label{eq:final_step_whole_uncached_lib}\\
&=\frac{K_R N F \left(1-\frac{M_R}{N}\right)^2}{K_T M_T+K_R M_R },
\end{align}
where \eqref{eq:final_step_cauchy} and \eqref{eq:final_step_whole_uncached_lib} follow from \eqref{eq:total_cache_plug_in} and \eqref{eq:whole_uncached_lib}, respectively. This completes the proof. \hfill\qed

\bibliographystyle{IEEEtran}
{\footnotesize
\bibliography{navid}}

\begin{thebibliography}{10}
\providecommand{\url}[1]{#1}
\csname url@samestyle\endcsname
\providecommand{\newblock}{\relax}
\providecommand{\bibinfo}[2]{#2}
\providecommand{\BIBentrySTDinterwordspacing}{\spaceskip=0pt\relax}
\providecommand{\BIBentryALTinterwordstretchfactor}{4}
\providecommand{\BIBentryALTinterwordspacing}{\spaceskip=\fontdimen2\font plus
\BIBentryALTinterwordstretchfactor\fontdimen3\font minus
  \fontdimen4\font\relax}
\providecommand{\BIBforeignlanguage}[2]{{%
\expandafter\ifx\csname l@#1\endcsname\relax
\typeout{** WARNING: IEEEtran.bst: No hyphenation pattern has been}%
\typeout{** loaded for the language `#1'. Using the pattern for}%
\typeout{** the default language instead.}%
\else
\language=\csname l@#1\endcsname
\fi
#2}}
\providecommand{\BIBdecl}{\relax}
\BIBdecl

\bibitem{golrezaei12}
N.~Golrezaei, K.~Shanmugam, A.~G. Dimakis, A.~F. Molisch, and G.~Caire,
  ``Femtocaching: Wireless video content delivery through distributed caching
  helpers,'' in \emph{INFOCOM}, March 2012, pp. 1107--1115.

\bibitem{maddahali_caching}
M.~A. Maddah-Ali and U.~Niesen, ``Fundamental limits of caching,''
  \emph{Information Theory, IEEE Transactions on}, vol.~60, no.~5, pp.
  2856--2867, 2014.

\bibitem{multi_server_caching}
S.~P. Shariatpanahi, S.~A. Motahari, and B.~H. Khalaj, ``Multi-server coded
  caching,'' \emph{arXiv preprint arXiv:1503.00265}, 2015.

\bibitem{liu15}
A.~Liu and V.~K.~N. Lau, ``Exploiting base station caching in {MIMO} cellular
  networks: Opportunistic cooperation for video streaming,'' \emph{Signal
  Processing, IEEE Transactions on}, vol.~63, no.~1, pp. 57--69, Jan. 2015.

\bibitem{sengupta2015cache}
A.~Sengupta, R.~Tandon, and O.~Simeone, ``Cache aided wireless networks:
  Tradeoffs between storage and latency,'' \emph{arXiv preprint
  arXiv:1512.07856}, 2015.

\bibitem{ji2013wireless}
M.~Ji, G.~Caire, and A.~F. Molisch, ``Wireless device-to-device caching
  networks: Basic principles and system performance,'' \emph{arXiv preprint
  arXiv:1305.5216}, 2013.

\bibitem{MaddahAli_3userIC_Caching}
M.~A. Maddah-Ali and U.~Niesen, ``Cache-aided interference channels,'' in
  \emph{Information Theory (ISIT), 2015 IEEE International Symposium on}.\hskip
  1em plus 0.5em minus 0.4em\relax IEEE, 2015, pp. 809--813.

\bibitem{azari2015hypergraph}
B.~Azari, O.~Simeone, U.~Spagnolini, and A.~Tulino, ``Hypergraph-based analysis
  of clustered cooperative beamforming with application to edge caching,''
  \emph{arXiv preprint arXiv:1510.06222}, 2015.

\bibitem{park2016joint}
S.-H. Park, O.~Simeone, and S.~Shamai, ``Joint optimization of cloud and edge
  processing for fog radio access networks,'' \emph{arXiv preprint
  arXiv:1601.02460}, 2016.

\bibitem{afshang2015fundamentals}
M.~Afshang, H.~S. Dhillon, and P.~H.~J. Chong, ``Fundamentals of
  cluster-centric content placement in cache-enabled device-to-device
  networks,'' \emph{arXiv preprint arXiv:1509.04747}, 2015.

\bibitem{tao2015content}
M.~Tao, E.~Chen, H.~Zhou, and W.~Yu, ``Content-centric sparse multicast
  beamforming for cache-enabled cloud {RAN},'' \emph{arXiv preprint
  arXiv:1512.06938}, 2015.

\bibitem{ugur2015cloud}
Y.~Ugur, Z.~H. Awan, and A.~Sezgin, ``Cloud radio access networks with coded
  caching,'' \emph{arXiv preprint arXiv:1512.02385}, 2015.

\bibitem{peng2015backhaul}
X.~Peng, J.-C. Shen, J.~Zhang, and K.~B. Letaief, ``Backhaul-aware caching
  placement for wireless networks,'' \emph{arXiv preprint arXiv:1509.00558},
  2015.

\bibitem{razaviyayn2012degrees}
M.~Razaviyayn, G.~Lyubeznik, and Z.-Q. Luo, ``On the degrees of freedom
  achievable through interference alignment in a {MIMO} interference channel,''
  \emph{Signal Processing, IEEE Transactions on}, vol.~60, no.~2, pp. 812--821,
  2012.

\bibitem{MIMOBCshamai}
H.~Weingarten, Y.~Steinberg, and S.~Shamai, ``The capacity region of the
  gaussian multiple-input multiple-output broadcast channel,''
  \emph{Information Theory, IEEE Transactions on}, vol.~52, no.~9, pp.
  3936--3964, Sept 2006.

\bibitem{birk}
Y.~Birk and T.~Kol, ``Informed-source coding-on-demand ({ISCOD}) over broadcast
  channels,'' in \emph{INFOCOM '98. Seventeenth Annual Joint Conference of the
  IEEE Computer and Communications Societies. Proceedings. IEEE}, vol.~3, Mar
  1998, pp. 1257--1264.

\bibitem{baryossef}
Z.~Bar-Yossef, Y.~Birk, T.~Jayram, and T.~Kol, ``Index coding with side
  information,'' in \emph{Foundations of Computer Science, 2006. FOCS '06. 47th
  Annual IEEE Symposium on}, Oct 2006, pp. 197--206.

\bibitem{rouayheb}
S.~El~Rouayheb, A.~Sprintson, and C.~Georghiades, ``On the index coding problem
  and its relation to network coding and matroid theory,'' \emph{Information
  Theory, IEEE Transactions on}, vol.~56, no.~7, pp. 3187--3195, 2010.

\bibitem{cached_itlinq}
N.~Naderializadeh, D.~T. Kao, and A.~S. Avestimehr, ``How to utilize caching to
  improve spectral efficiency in device-to-device wireless networks,'' in
  \emph{Communication, Control, and Computing (Allerton), 2014 52nd Annual
  Allerton Conference on}.\hskip 1em plus 0.5em minus 0.4em\relax IEEE, 2014,
  pp. 415--422.

\bibitem{itlinq}
N.~Naderializadeh and A.~S. Avestimehr, ``{ITLinQ}: A new approach for spectrum
  sharing in device-to-device communication systems,'' \emph{Selected Areas in
  Communications, IEEE Journal on}, vol.~32, no.~6, pp. 1139--1151, June 2014.

\bibitem{itlinq_dyspan}
------, ``{ITLinQ}: A new approach for spectrum sharing,'' in \emph{Dynamic
  Spectrum Access Networks (DYSPAN), 2014 IEEE International Symposium
  on}.\hskip 1em plus 0.5em minus 0.4em\relax IEEE, 2014, pp. 327--333.

\bibitem{itlinq_isit}
------, ``{ITLinQ}: A new approach for spectrum sharing in device-to-device
  communication systems,'' in \emph{Information Theory (ISIT), 2014 IEEE
  International Symposium on}, June 2014, pp. 1573--1577.

\end{thebibliography}

\end{document}